\documentclass[12pt,centertags,reqno]{amsart}

\usepackage{latexsym}
\usepackage{etex}
\usepackage[utf8]{inputenc} 
\usepackage[T1]{fontenc}
\usepackage[english]{babel}
\usepackage{graphicx} 
\usepackage{mathrsfs}
\usepackage[foot]{amsaddr}
\usepackage{amsmath,amssymb,amsthm}
\usepackage{amsfonts}
\usepackage{amsthm}
\usepackage{epsfig}
\usepackage{empheq}
\usepackage{enumerate} 
\usepackage{enumitem} 
\usepackage{bigints} 
\usepackage{ifsym}
\usepackage{url}
\usepackage{mathrsfs, mathtools}
\usepackage{stmaryrd}
\usepackage[round]{natbib}
\usepackage{hyperref} 
\usepackage{float}   
\usepackage{fancyhdr}
\usepackage{bbm}
\usepackage{marvosym}
\usepackage{caption}
\usepackage{textcomp}
\usepackage{pdfpages}
\usepackage{color,graphicx}
\usepackage{booktabs}
\usepackage{orcidlink}

\DeclareMathOperator*{\esssup}{ess\,sup}
\mathtoolsset{showonlyrefs}

\textwidth = 17.60cm
\textheight = 22.00cm
\oddsidemargin = -0.2in
\evensidemargin = -0.2in
\setlength{\parindent}{0pt}
\setlength{\parskip}{5pt plus 2pt minus 1pt}

\numberwithin{equation}{section} \makeatletter
\renewcommand{\subsection}{\@startsection
{subsection}{2}{0mm}{\baselineskip}{-0.25cm}
{\normalfont\normalsize\bf}} \makeatother

\captionsetup[figure]{font=small,skip=0pt}

%
%
%

\newcommand{\N}{\mathbb{N}} 
\newcommand{\R}{\mathbb{R}} 
\newcommand{\Rp}{\mathbb{R}^{+}}

\newcommand{\pd}[2]{\dfrac{\partial#1}{\partial#2}}
\newcommand{\pds}[2]{\dfrac{\partial{^{2}}#1}{\partial{#2}^{2}}}

\newtheorem{theorem}{Theorem}[section]
\newtheorem{lemma}[theorem]{Lemma}
\newtheorem{corollary}[theorem]{Corollary}
\newtheorem{definition}[theorem]{Definition}
\newtheorem{remark}[theorem]{Remark}
\newtheorem{proposition}[theorem]{Proposition}

\newtheorem{ass}[theorem]{Assumption}

\newtheorem{problem}[theorem]{Problem}

\def \I {{\mathbf 1}}
\def \P {\mathbf P}

\def \F {\mathcal F}
\def \bF {\mathbb F}

\newcommand{\ud}{\mathrm d}
\newcommand{\ds}{\displaystyle}
\newcommand{\esp}[2][\mathbb E] {#1\left[#2\right]}

\sloppy



\begin{document}


\author[A.~Cretarola]{Alessandra Cretarola\, \orcidlink{0000-0003-1324-9342}}
\address{Alessandra Cretarola({\large \Letter}), Department of Economics,
 University “G. D’Annunzio” of Chieti-Pescara, Viale Pindaro, 42, I-65127 Pescara, Italy.}\email{alessandra.cretarola@unich.it}

\author[B.~Salterini]{Benedetta Salterini\,\orcidlink{0000-0001-6950-4766}}
\address{Benedetta Salterini, Department of Mathematics and Computer Science,
University of Firenze, Viale Morgagni, 67/A, I-50134 Firenze, Italy.}\email{benedetta.salterini@unifi.it}

\title{Indifference pricing of pure endowments in a regime-switching market model}

\maketitle

\begin{abstract}
In this paper, we study the exponential utility indifference pricing of pure endowment policies within a stochastic-factor model for an insurer who also invests in a financial market. Our framework incorporates a hazard rate modeled as an observable diffusion process, while the risky asset price follows a jump-diffusion process driven by a continuous-time finite-state Markov chain, effectively capturing different economic regimes.
Using the classical stochastic control approach based on the Hamilton-Jacobi-Bellman equation, we derive optimal investment strategies with and without the insurance derivative and characterize the indifference price as a classical solution to a linear partial differential equation (PDE). Additionally, we provide a probabilistic representation of the indifference price via an extension of the Feynman-Kac formula and show that it satisfies a suitable backward PDE.
Finally, some numerical experiments are conducted to perform sensitivity analyses, highlighting the impact of key model parameters.
\end{abstract}

{\bf Keywords}:
Pure endowment;  regime-switching; jump processes; optimal investment; stochastic control; indifference pricing

\section{Introduction}

Indifference pricing has emerged as a fundamental approach in incomplete markets, particularly for life insurance and pension-related financial products. The exponential utility-based framework provides a robust method for valuing insurance-linked securities by incorporating investor preferences under uncertainty. A significant aspect of this study is the incorporation of regime-switching dynamics into the pricing framework, where the financial market is influenced by a continuous-time finite-state Markov chain.

The utility indifference pricing method, initially proposed by \citet{HodgesNeuberger1989} and refined by \citet{DavisEtAl1993}, has gained much attention in the literature on pricing and hedging contingent claims (see \citet{HendersonHobson2009} for a survey). According to this technique, the indifference seller’s (or insurer’s) price is defined at the level where the issuer of the contract is indifferent between entering the market on its own or selling the claim and entering the market with the collected premium. It can be determined by solving an equation involving two value functions, resulting from the stochastic control problems with and without insurance liabilities. This method has been widely used to price derivatives in incomplete markets and has been successfully applied to insurance contracts (e.g., \citet{LudkowskiYoung2008, Delong2009, EichlerEtAl2017, choi2017indifference, LiangLu2017, CeciEtAl2020}). \\
In recent years, there has been increasing interest in utility-based indifference pricing in regime-switching models. Notably, \citet{ElliottSiu2011} explored utility-based indifference pricing and hedging of derivative securities where financial markets evolve under regime-switching dynamics. 
Indifference pricing in regime-switching models is explored within a multi-period stochastic portfolio selection framework, where risk aversion evolves with the regime, leading to time-inconsistent strategies and the computation of utility indifference ask prices via an indifference valuation algorithm, as studied in \citet{pirvu2013utility}.
More recently, in \citet{gyulov2022penalty}, indifference pricing in regime-switching models is studied through a numerical method for American options, formulating the problem as a nonlinear complementarity problem and applying an interior penalty approach for theoretical and computational efficiency. As far as we know, indifference pricing has not been explored in insurance models with regime-switching dynamics, even though such models are increasingly used to capture macroeconomic shifts and structural changes in financial and actuarial contexts.\\
In this paper, we study the indifference pricing problem of pure endowment contracts for an insurer operating in a continuous-time financial market. Specifically, we consider a setting where the price of the risky asset can exhibit jumps and is affected by regime changes, while the hazard rate governing population mortality is stochastic and driven by a general diffusion process. Modeling the hazard rate of individuals as a diffusion process has become standard in the actuarial literature.
The first stochastic mortality model was introduced by \citet{MILPRO}.
Early affine‐diffusion specifications were proposed by \citet{DAHL04,biffis}. Key contributions to stochastic mortality modeling also include the framework of \citet{cairns} for valuing and securitizing mortality risk, the calibration and empirical investigation of affine stochastic intensity models by \citet{LV2008}, and the study of mortality–market dependence in \citet{DC2015}. More recently, \citet{CeciEtAl2020} allowed the mortality intensity to depend on an additional stochastic factor, designed to capture economic and environmental influences.\\ 
A pure endowment is a life insurance contract that provides a fixed payout at a specified future time, provided that the insured person is still alive at that moment. The terminal benefit is predetermined and does not depend on the performance of any traded asset in the financial market.
This type of insurance product has practical applications in various fields, including pension plans and socially responsible insurance policies, where the goal is to ensure a future payout independent of financial market fluctuations. In particular, our modeling framework can be adapted to describe ESG-related (Environmental, Social, and Governance) life insurance products, such as life policies linked to sustainable investment funds or pension plans incorporating social responsibility criteria. In such contracts, the payoff structure remains fixed, aligning with the objectives of financial security and sustainability.
Our framework accounts for three main sources of risk: financial risk arising from asset price fluctuations, economic risk (or regime-switching risk) due to structural shifts in economic conditions, and mortality risk. Inspired by \citet{LudkowskiYoung2008}, we extend previous approaches by incorporating a more sophisticated financial market structure where stock prices exhibit multiple jumps. Indeed, the risky asset price evolution is described by a jump diffusion process where the appreciation rate and the volatility depend on an observable continuous-time, finite-state Markov chain representing the regimes of the economy.
To the best of our knowledge, the indifference pricing of life-insurance liabilities in a Markov-modulated framework, accounting for long-term macroeconomic effects described by a continuous-time Markov chain, possible jumps in the risky asset price dynamics, and a stochastic hazard rate, has not been explored before.\\
Our approach allows to yield explicit results, making it particularly suitable for practical applications in insurance and financial risk management. 
One critical modeling choice in our framework is that the Markov chain affects only the price dynamics of the risky asset and not the interest rate. This assumption is motivated by the observation that economic regime shifts often primarily impact market volatility, liquidity conditions, and risk premiums, while central bank policies tend to stabilize interest rates over medium-term horizons. By adopting this framework, we strike a balance between model complexity and analytical tractability, ensuring that our results remain both rigorous and practically relevant. The choice of modeling assumptions allows us to derive explicit pricing formulas that facilitate risk assessment and decision-making, making our approach particularly relevant for both academic research and industry applications.\\
An additional feature of our model is to take the hazard rate of individuals as a general diffusion process, in order to capture the unexpected changes in mortality. 
Indeed, empirical evidence suggests that wars, medical breakthroughs, developments in healthcare and improved lifestyles combine to affect human mortality in a fluctuating and unpredictable manner. The uncertainty given by minuscule and continuous movements of the mortality intensity is usually represented by a Brownian motion, see \citet{cairns} for an overview. As a consequence, it seems reasonable to require that in our setting the exogenous stochastic factor, representing long-term environmental changes, does not affect the mortality intensity; therefore the insurance market remains independent of the financial market. \\
We price the policy through the principle of equivalent utility by comparing the maximal expected utility functions with and without writing the life insurance contract. 
Under exponential utility and using the classical stochastic control approach based on the Hamilton-Jacobi-Bellman (HJB) equation, we describe the optimal investment strategy and show verification results for the value functions of the problems without and with insurance liabilities via classical solutions to a linear PDE and a system of ordinary differential equations (ODEs), see Theorem \ref{thver} and Theorem \ref{thver2}. Further, we characterize the indifference price of the pure endowment in Proposition \ref{thprezzo}. We prove that it solves a propernbackward PDE and we also obtain its probabilistic representation by means of an extended version of the Feynman-Kac formula.
Finally, numerical experiments are performed to investigate some features of our model specification, emphasizing the impact of the regime-switching and the randomness effect introduced by the stochastic hazard rate.\\
The remainder of this paper is structured as follows: Section \ref{sec:model} presents the modeling framework, detailing the dynamics of the financial market, mortality intensity, and Markov-modulated regime shifts. Section \ref{sec:form} formulates the optimization problem, while Section \ref{sec:hjb} derives the optimal investment strategies, both with and without the insurance derivative employing dynamic programming. Section \ref{sec:indiff-price} 
characterizes the indifference price for the pure endowment contract
whereas Section \ref{sec:numerical}
provides numerical illustrations. Finally, Section \ref{sec:conclusion} presents the concluding remarks of the paper along with possible directions for future research. All technical proofs and additional auxiliary results are collected in Appendix \ref{app:tech} and Appendix \ref{app:HJB}.

\section{Modeling framework}\label{sec:model}

We consider a complete probability space $(\Omega, \F, \P)$ endowed with a filtration $\mathbb{G}=\{\mathcal{G}_t,\ t \in [0,T]\}$, satisfying the usual conditions of completeness and right continuity, where $T>0$ is a fixed, finite time horizon. 
Specifically, the filtration $\mathbb{G}$ is given by 
\begin{equation}
	\mathbb{G}= \bF \vee \bF^I,
\end{equation} 
where the filtration $\bF=\{\F_t,\ t \in [0,T]\}$ models the information flow in the financial market  
and $\bF^I=\{\F_t^I,\ t \in [0,T]\}$ contains information about the lifetime of the individual insured. We assume that the subfiltrations $\bF$ and $\bF^I$ are independent.

To describe some possible structural changes in economic conditions, we introduce an irreducible and continuous-time Markov chain $X=\{X_{t}, \ t \in [0,T]\}$ with finite state space $\mathcal{X}=\{ 1,2, \ldots,M \} $, whose transition probabilities satisfy
\begin{equation*}
	\P(X_{t+\delta t}=j | X_{t}=i) = a_{ij} \delta t + o(\delta t), \ i\neq j; \ \ \P(X_{t+\delta t}=i | X_{t}=i) = 1+ a_{ii} \delta t + o(\delta t), 
\end{equation*} when $\delta\longrightarrow 0$, where for each $i \in \mathcal{X}$ we have \begin{equation}\label{propcat}
	a_{ij}\geq 0  \ \ \mbox{for each } i \neq j \ \ \ \mbox{and} \ \ \ a_{ii}=-\sum_{j=1}^{M} a_{ij}.
\end{equation} 
Here, $X_t$ represents the regime of the economy at time $t$, and $M$ the number of regimes.
Let $\textbf{A}=(a_{ij})_{i,j\in\mathcal{X}}$ denote the generating $Q$-matrix of the Markov chain $X$.
 \noindent It is convenient to represent $X$ as a stochastic integral with respect to a Poisson random measure. Following the description of \citet{GOPAL}, for $i,j \in \mathcal{X}$, with $i \neq j$, we denote by $\Delta_{ij}$ the consecutive (with respect to the lexicographic ordering on $\mathcal{X} \times \mathcal{X}$) left-closed right-open intervals of the real line, each having length $a_{ij}$ and define a function $h: \mathcal{X} \times \R \longrightarrow \R^M$ by embedding $\{ 1,2, \ldots,M \}$ into $\R^M$ (identifying $i$ with $e_i \in \R^M$), as follows
 \begin{equation} \label{h}
 h(i,z)=\left \{\begin{array}{ll}
 j-i,	\hspace{0.5cm}	\mbox{if } z\in \Delta_{ij} \\
 0, 	\hspace{1.1cm}	{\rm otherwise}.
 \end{array} \right .
 \end{equation} 
Then, we get
\begin{equation}\label{X}
X_t=X_0 + \int_0^t \int_{\R} h(X_{v-},z) \mathcal{P}(\ud z,\ud v),\quad t \in[0, T], 
\end{equation} 
where the integration is over the interval $(0,t]$ and $\mathcal{P}(\ud z,\ud t)$ is a Poisson random measure with intensity $m(\ud z)\ud t$, with $m(\ud z)$ being the Lebesgue measure on $\R$. Let $\widehat{\mathcal{P}}(\ud z,\ud t)$ be the compensated Poisson random measure, i.e. $\widehat{\mathcal{P}}(\ud z,\ud t) = \mathcal{P}(\ud z,\ud t) - m(\ud z)\ud t$.

In this setting, we consider a financial market consisting of a locally risk-free money market account and one stock as a risky asset. 
The price process $B=\{ B_{t},\ t \in [0,T]\}$ of the locally risk-free asset is described by 
\begin{equation} \label{b}
\ud B_{t}=rB_{t} \ud t, \quad  B_{0}=1,
\end{equation} where $r$ is a positive constant denoting the risk-less interest rate.
The risky asset price process $S=\{S_{t}, \ t\in [0,T]\}$ evolves over time according to the following regime-switching dynamics
\begin{equation}
	\label{s}
	\ud S_{t}=S_{t-}\left(\mu(t,X_{t})\ud t + \sigma(t,X_{t})\ud Z_{t}^S + K_1(t,X_{t-})\ud N^1_t - K_2(t,X_{t-}) \ud N^2_t\right),
\end{equation}
with $ S_{0}=s \in \R^+$, where $\R^+=(0,+\infty)$.
Here, $Z^S=\{Z_{t}^S,\ t \in [0,T]\}$ is a standard Brownian motion independent of $X$ and $N^1=\{N_t^1,\ t \in [0,T]\}$ and $N^2=\{N_t^2,\ t \in [0,T]\}$ are independent Poisson processes defined on $(\Omega,\mathcal{F},\P;\bF)$. 
Furthermore, we suppose that $N^1$, $N^2$ are independent of $Z^S$ and $X$ and that the $\bF$-intensities of $N^1$ and $N^2$ are 
positive deterministic functions $\Theta_1:[0,T] \longrightarrow \R^+$ and $\Theta_2:[0,T] \longrightarrow \R^+$, respectively. 
The coefficients $\mu: [0,T] \times \mathcal{X} \longrightarrow \R^+$ and $\sigma: [0,T] \times \mathcal{X} \longrightarrow \Rp$ are measurable functions which model the appreciation rate and the volatility of the stock, respectively, such that $\mu(t,i)>r$, for all $(t,i) \in [0,T]\times \mathcal X$ and 
\begin{equation}\label{ipsmg}
	\int_0^T  \left( \mu(t,X_t) + \sigma^2(t,X_t) 
	\right) 
	\ud t < \infty \quad \P \mbox{-a.s.}.
\end{equation}
Moreover, $K_1:[0,T] \times \mathcal{X} \longrightarrow \R^+$ and $K_2:[0,T] \times \mathcal{X} \longrightarrow \R^+$ are measurable functions such that $K_l(t,i)>0$, $l=1,2$ and $K_2(t,i)<1$, for every $(t,i) \in [0,T] \times \mathcal{X}$. From \eqref{X} and \eqref{s} it is clear that the pair $(S,X)$ is an $(\mathbb{F},\P)$-Markov process. The main motivation for introducing a regime-switching behavior is to have a model capable of describing the risky asset price dynamics under different market conditions. Specifically, this model provides a rich framework to capture both smooth price movements and abrupt changes in risky asset prices, influenced by market regimes and external shocks. The term \(K_1(t, X_{t-}) \ud N_t^1\) introduces upward jumps in the asset price, driven by the Poisson process \(N^1\) with intensity potentially depending on \(X\). Economically, these jumps could represent sudden favorable events such as positive earnings announcements, technological breakthroughs, or market expansions. On the other hand, the term \(-K_2(t, X_{t-}) \ud N_t^2\) models downward jumps in the asset price, driven by a separate Poisson process \(N^2\). These could represent adverse events such as unexpected losses, regulatory shocks, or macroeconomic downturns.
\begin{remark}
\label{Ssol}
By the Doléans-Dade exponential formula, condition $K_2(t,i) < 1$ allows us to write
\begin{equation}
S_t=s e^{L_t}, \quad t \in [0,T],
\end{equation} 
where the logreturn process $L=\{L_{t}, \ t \in [0,T] \}$ is given by
\begin{equation}
\begin{split}
\ud L_t & = \big( \mu(t,X_t)-\frac{1}{2} \sigma^2(t,X_t) \big) \ud t + \sigma(t,X_t) \ud Z_t^S + \ln(1+K_1(t,X_{t-})) \ud N^1_t \\
& \qquad + \ln(1-K_2(t,X_{t-})) \ud N^2_t, \quad L_0=0.
\end{split}
\end{equation}
\end{remark}

\begin{proposition}
If we assume that 
\begin{equation}\label{ipN}
	\int_0^T  \left( K_1^2(t,X_{t-})\Theta_1(t) + K_2^2(t,X_{t-}) \Theta_2(t) \right) \ud t < \infty \quad \P \mbox{-a.s.},
\end{equation}
then the process $S$ is an $\bF$-semimartingale with decomposition
\begin{equation}\label{ssmg}
	S_t=s+A_t^S+M_t^S,
\end{equation} 
where $A^S=\{A_{t}^S, \ t \in [0,T] \}$ defined as
\begin{equation}
	A_t^S = \int_{0}^t S_{v-} \left(\mu(v,X_{v-})+ K_1(v,X_{v-})\Theta_1(v) + K_2(v,X_{v-}) \Theta_2(v)\right)\ud v,
\end{equation} 
is an $\R$-valued process with finite variation paths and $A_0^S=0$, while $M^S=\{M_{t}^S, \ t \in [0,T] \}$ given by 
\begin{equation}
\begin{split}
	 M_t^S&=\int_{0}^t   S_v \sigma(v,X_v) \ud Z_v^S + \int_{0}^{t} S_{v-} K_1(v,X_{v-}) \{ \ud N^1_v -\Theta_1(v) \ud v \} \\
     & \qquad - \int_{0}^{t} S_{v-} K_2(v,X_{v-}) \{ \ud N^2_v -\Theta_2(v) \ud v \}
     \end{split}
\end{equation} 
is an 
$\bF$-local martingale with $M_0^S=0$. 
\begin{proof}
The proof follows directly from \eqref{s}.
\end{proof}
\end{proposition}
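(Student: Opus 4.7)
My plan is to prove the proposition in two logical steps: first establish that the process $S$ is an $\bF$-semimartingale by representing it as a stochastic integral with respect to a process $R$ that is itself a semimartingale, and then separate the drift and martingale contributions to obtain the explicit decomposition.

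For the first step, I would define $R$ exactly as in the statement, and observe from \eqref{s} that $\ud S_t = S_{t-} \ud R_t$. To show $R$ is an $\bF$-semimartingale, I would split it as
\begin{equation}
R_t = \int_0^t \mu(v,X_v)\,\ud v + \int_0^t \sigma(v,X_v)\,\ud Z^S_v + \int_0^t K_1(v,X_{v-})\,\ud N^1_v - \int_0^t K_2(v,X_{v-})\,\ud N^2_v,
\end{equation}
so that condition \eqref{ipsmg} guarantees that the drift has finite variation and that the Brownian integral is a well-defined continuous local martingale, while condition \eqref{ipN} ensures the two jump integrals are càdlàg processes of integrable variation whose compensated versions are local martingales. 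Then $S_{t-}$ is left-continuous and $\bF$-adapted, hence locally bounded and predictable, so the semimartingale stability under predictable stochastic integration yields that $S = s + \int_0^\cdot S_{v-}\,\ud R_v$ is an $\bF$-semimartingale.

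For the second step, I would compensate the two Poisson jump integrals against their intensities: writing $\ud N^l_v = (\ud N^l_v - \Theta_l(v)\,\ud v) + \Theta_l(v)\,\ud v$ for $l=1,2$, the deterministic-intensity pieces combine with the $\mu$-drift to yield exactly the claimed $A^S$, and the remaining integrals against $\ud Z^S$ and against the compensated Poisson measures give the claimed $M^S$. The finite-variation character of $A^S$ is immediate from \eqref{ipsmg}--\eqref{ipN} together with local boundedness of $S_-$.

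The most delicate point, and the step I would treat with the most care, is the verification that $M^S$ is a genuine $\bF$-local martingale. For this I would introduce the sequence of stopping times $\tau_n = \inf\{t \in [0,T]:\ S_{t-} > n\} \wedge T$; because $S$ is càdlàg, $\tau_n \uparrow T$ P-a.s., and on $[0,\tau_n]$ the integrand $S_{v-}$ is bounded by $n$. Combined with \eqref{ipsmg} and \eqref{ipN}, this gives $\mathbb{E}[\int_0^{\tau_n} S_{v-}^2 \sigma^2(v,X_v)\,\ud v] < \infty$ and $\mathbb{E}[\int_0^{\tau_n} S_{v-}^2 K_l^2(v,X_{v-})\Theta_l(v)\,\ud v] < \infty$ after a further localization if needed, so each localized piece of $M^S$ is a true martingale. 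I expect the author's proof to compress this reasoning by invoking the standard fact that $S_{v-}\,\ud R_v$ inherits the semimartingale decomposition of $R$, which is what the short proof in the excerpt appears to do.
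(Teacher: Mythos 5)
Your argument follows the same route as the paper's proof: write $\ud S_t = S_{t-}\,\ud R_t$ with $R$ the semimartingale built from \eqref{ipsmg} and \eqref{ipN}, and invoke stability of semimartingales under integration of the locally bounded predictable integrand $S_{-}$. The additional details you supply (compensating the Poisson integrals to identify $A^S$ and $M^S$, and the localization via $\tau_n=\inf\{t: S_{t-}>n\}\wedge T$ to confirm the local martingale property) are correct and simply make explicit what the paper's two-line proof leaves implicit.
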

\noindent Now, we consider an individual 
to be insured and a stochastic model for the mortality of the equivalent age cohort of the population.
We assume that the hazard rate (or mortality intensity) is governed by a diffusion process.
Stochastic intensity models for mortality have been proposed in the literature by \citet{MILPRO}, with further contributions by \citet{DAHL04,biffis,cairns,LV2008,DC2015}, and, more recently, by \citet{CeciEtAl2020}.
Precisely, we describe the mortality intensity as a stochastic process $\Lambda = \{\lambda_t,\ t \in [0,T]\}$ that is given by the following stochastic differential equation (in short SDE) 
\begin{equation}
	\label{y}
\ud\lambda_{t}=\lambda_t\left(b(t,\lambda_{t}) \ud t+ c(t,\lambda_{t}) \ud Z_{t}^{\Lambda}\right), \quad  \lambda_{0}=\lambda \in \R^+.
\end{equation}
Here, $Z^{\Lambda}=\{Z^{\Lambda}_{t}, \ t \in [0,T]\}$ is an additional standard Brownian motion on $(\Omega,\F,\P;\bF^I)$. 
Moreover, $b: [0,T] \times \R \longrightarrow \R$ and $c: [0,T] \times \R \longrightarrow \R$ are two measurable functions such that a unique strong solution to \eqref{y} exists and the following conditions hold
\begin{equation}\label{eq:exp_coeff}
\esp{\int_0^T |b(t,\lambda_t) \lambda_t| \ud t + \int_0^T c(t,\lambda_t)^2 \lambda_t^2 \ud t}< \infty,
\end{equation}
\begin{equation}\label{cond:sup}
\sup_{t \in [0,T]}\esp{\lambda_t^2} < \infty.
\end{equation}
These conditions are satisfied if, for instance, the coefficients of the SDE \eqref{y} fulfill the classical Lipschitz and sublinear growth conditions, see e.g. 
\citet{GIHMSKOR}.

We observe that, 
the mortality rate of the insured is generally different from that of its age cohort. However, to keep the framework tractable we consider individuals subjected to the same stochastic hazard rate, as e.g. in \citet{LudkowskiYoung2008}.

Let $\tau$ be a nonnegative random variable on $(\Omega,\mathcal{F},\P)$ which represents the remaining lifetime of the given individual of the reference population with mortality rate $\Lambda$. 
Denote by $D=\{D_{t}, \ t \in [0,T] \}$ the death indicator process associated with $\tau$ by setting $D_{t}:=\I_{\{\tau \leq t \}}$, for every $t \in [0,T]$.
Clearly, $D$ is an $\bF^I$-adapted process.

\section{The indifference pricing problem formulation}\label{sec:form}

Now, we assume that the insurer issues pure endowment policy, which is a life insurance contract 
where the insured receives a specified amount (the endowment) if she/he survives until the end of a predetermined term. If the policyholder dies before the term ends, no benefit is paid.
In particular, we consider a pure endowment contract with maturity of $T$ years, which pays a fixed amount if the policyholder is still alive. Then, the associated payoff is given by the random variable 
	\begin{equation} \label{payoffPE}
		G_{T}:= K \I_{\{\tau > T \}} = K(1-D_T),
	\end{equation} 
where $K$ is a positive constant.
The goal is to evaluate the pure endowment policy with payoff given by \eqref{payoffPE} in the Markov-modulated model outlined in Section $2$. 
Since the financial market consists of two primary securities and several sources of random shocks due to mortality events and structural changes in economic conditions, it turns out to be incomplete.

Therefore, we apply the indifference pricing approach assuming that the insurer preferences towards the risk are given by an exponential utility function of the form \begin{equation}
\label{uesp}
u(w)=-e^{-\alpha w}, \quad  w \in \R,
\end{equation} 
where $\alpha$ is a positive parameter which measures the absolute risk aversion.
Since there is no universally accepted utility function for decision makers' preferences, we adopt the exponential utility in \eqref{uesp} as a representative case: it preserves the essential features of indifference pricing while allowing for more explicit computations.
As we will see, this choice removes wealth from the HJB’s second-order term and thus leads to a linear backward PDE (two decoupled linear PDEs in the case with the insurance derivative) and an associated ODE system. By contrast, for any other non-CARA preference the HJB remains fully nonlinear and typically admits no classical solution.\\
In the underlying financial market, the insurer starts out with an initial wealth $w$, and then proceeds to trade dynamically among the risky asset and the locally risk-free asset, following a self-financing strategy. Let $\Pi=\{ \Pi_{t}, \ t \in [0,T]\}$ be the total amount of wealth invested in the stock, with the remainder of wealth in the money market account. 
The insurer is also allowed to short-sell and to borrow/lend any infinitesimal amount, so that $\Pi_t \in \R$, for each $t \in [0,T]$. 
Finally, we focus exclusively on self-financing strategies: the insurer reinvests only the profits generated by her/his core business into the available assets, without withdrawing any funds or injecting external capital.
Precisely, given an initial wealth $w \in \R^+$, the insurer wealth process $\{W_t^\Pi,\ t \in [0,T]\}$ associated to a given strategy $\Pi$ evolves over time as
\begin{equation}\label{wealth}
\begin{split}
\ud W_{t}^\Pi &= \Pi_{t} \frac{\ud S_{t}}{S_{t-}} + (W_{t}^\Pi-\Pi_{t})\frac{\ud B_{t}}{B_{t}} 
\\ &= \left(rW_{t}^\pi+\Pi_{t}\left(\mu(t,X_{t})-r \right)\right) \ud t \\
& \qquad + \Pi_{t}\sigma(t,X_{t})\ud Z_{t}^S + \Pi_{t} \Big( K_1(t,X_{t-}) \ud N^1_t - K_2(t,X_{t-}) \ud N_t^2 \Big), 
\end{split}
\end{equation} 
with $W_{0}^\Pi=w \in \R^+$.
\begin{remark}
It can be checked that the solution to the SDE \eqref{wealth} is given by
\begin{equation}\begin{split}\label{def:solwealth}
W_t^{\Pi} &= W_0^{\Pi} e^{rt} + \int_0^t e^{r(t-s)} \Pi_s(\mu(s,X_s)-r) \ud s + \int_0^t e^{r(t-s)}\Pi_s \sigma(s,X_s) \ud Z_s^S \\ & \quad + \int_0^t e^{r(t-s)} \Pi_s \Big( K_1(s,X_{s-}) \ud N^1_s - K_2(s,X_{s-}) \ud N_s^2 \Big) , 
\end{split}\end{equation} with $W_{0}^\Pi=w$. Indeed, if we set
\begin{equation}\begin{split}
U_t^{\Pi} &:= w + \int_0^t e^{-rs} \Pi_s(\mu(s,X_s)-r) \ud s + \int_0^t e^{-rs}\Pi_s \sigma(s,X_s) \ud Z_s^S \\ & \quad + \int_0^t e^{-rs} \Pi_s \Big( K_1(s,X_{s-}) \ud N^1_s - K_2(s,X_{s-}) \ud N_s^2 \Big), \quad t \in [0,T],
\end{split}\end{equation}
then by \eqref{def:solwealth} we have \(W_t^{\Pi}=e^{rt}U_t^{\Pi}\).
Applying It\^o's product rule gives
\begin{align}
\ud W_t^{\Pi}
&= \mathrm{d}\bigl(e^{rt}U_t^{\Pi}\bigr)
= r\,e^{rt}U_t^{\Pi}\,\ud t + e^{rt}\,\ud U_t^{\Pi} \\[4pt]
&= r\,W_t^{\Pi}\,\ud t
  + e^{rt}\,\biggl[
      e^{-r t}\,\Pi_t(\mu(t,X_t)-r)\,\ud t
      + e^{-r t}\,\Pi_t\,\sigma(t,X_t)\,\ud Z_t^S \notag\\
&\quad\hphantom{{}={}}
      + e^{-r t}\,\Pi_{t}\bigl(K_1(t,X_{t^-})\,\ud N^1_t
                             -K_2(t,X_{t^-})\,\ud N^2_t\bigr)
    \biggr]\\[4pt]
&= \bigl(r\,W_t^{\Pi} + \Pi_t(\mu(t,X_t)-r)\bigr)\,\ud t
  + \Pi_t\,\sigma(t,X_t)\,\ud Z_t^S\\[4pt]
  & \quad \hphantom{{}={}} + \Pi_{t}\bigl(K_1(t,X_{t^-})\,\ud N^1_t
                 -K_2(t,X_{t^-})\,\ud N^2_t\bigr),
\end{align}
which is exactly \eqref{wealth}. Furthermore, by standard existence and uniqueness theorems for linear SDEs with jumps, the explicit solution \eqref{def:solwealth} is the unique  càdlàg solution of \eqref{wealth}.

\end{remark}

\noindent In order to characterize the indifference price of the pure endowment (see Definition \ref{indiff-price} below), we introduce two optimal investment problems, with and without insurance liabilities. We start by defining 
the class of admissible strategies.
\begin{definition}
An {\em admissible} strategy is a self-financing portfolio identified by an $\R$-valued $\mathbb{G}$-predictable process $\Pi=\{ \Pi_{t}, \ t \in [0,T]\}$ 
such that 
\begin{equation} \label{int_ammiss1}
\esp{\int_0^T |\Pi_t| (\mu(t,X_t)-r) \ud t} < \infty, \qquad \esp{\int_0^T\Pi_t^2 \sigma^2(t,X_t)\ud t} < \infty,
\end{equation}
\begin{equation} \label{int_ammiss3}
\esp{\int_0^T |\Pi_t| \Big( K_1(t,X_{t-})\Theta_1(t) + K_2(t,X_{t-}) \Theta_2(t) \Big) \ud t}< \infty.
\end{equation}
We denote by $\mathcal{A}$ the set of $\mathbb{G}$-admissible strategies. Whenever the controls are restricted to the time interval $[t,T]$, we will use the notation $\mathcal{A}_t$.
\end{definition}
\noindent Now, we assume that the following assumptions are in force throughout the paper. 
\begin{ass}\label{int_condit}
\begin{itemize}
\item[]
    \item[(i)] There exist three positive constants $M_1$, $M_2$ and $K$ such that
    \begin{equation}
        \Theta_1(t) \le M_1, \quad \Theta_2(t) \le M_2, \quad K_1(t,i) \le K, \quad \mbox{for\ every}\ (t,i) \in [0,T] \times \mathcal X.
    \end{equation}
    \item[(ii)] There is a constant $C >0$ such that $\frac{\mu(t,i)-r}{\sigma(t,i)} \le C$, for every $(t,i) \in [0,T] \times \mathcal X$.
\end{itemize}
\end{ass}
\noindent In particular, Assumption \ref{int_condit}$(i)$ provides a sufficient condition for a strategy $\Pi$ to be admissible as shown in the next result.
\begin{proposition}\label{suff}
Let $\Pi=\{ \Pi_{t}, \ t \in [0,T]\}$ be a $\mathbb G$-predictable strategy with values in $\R$. Assume there exists a square-integrable function $\eta:[0,T] \times \mathcal X \to \R^+$ such that
\begin{equation}\label{eq:pi-eta}
  \left| \Pi_t \right| \le \eta (t,X_t), \quad t \in [0,T],\quad \P-a.s.
\end{equation}
and
\begin{equation}\label{eta_cond}
    \int_0^T \eta(s,i)\left((\mu(s,i) - r)+\eta(s,i)\sigma^2(s,i)\right) \ud s < \infty, \quad \forall i \in \mathcal X.
\end{equation}
Then, $\Pi$ is an admissible strategy, i.e. $\Pi \in \mathcal A$.
\end{proposition}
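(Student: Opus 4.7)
\textbf{Proof plan for Proposition \ref{suff}.}
The strategy is to verify each of the three integrability conditions \eqref{int_ammiss1}--\eqref{int_ammiss3} separately, in each case using the pointwise bound \eqref{eq:pi-eta} and exploiting the fact that $X$ takes values in the finite set $\mathcal X$. Throughout, the $\mathbb G$-predictability of $\Pi$ is already given; only integrability remains.

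For condition \eqref{int_ammiss1}, apply \eqref{eq:pi-eta} and positivity of $\mu(t,X_t)-r$ to obtain
\begin{equation}
\esp{\int_0^T |\Pi_t|(\mu(t,X_t)-r)\,\ud t}\le \esp{\int_0^T \eta(t,X_t)(\mu(t,X_t)-r)\,\ud t}\le \sum_{i\in\mathcal X}\int_0^T \eta(t,i)(\mu(t,i)-r)\,\ud t,
\end{equation}
using Fubini and the trivial estimate $\P(X_t=i)\le 1$ to drop the weighting. The right-hand side is finite by \eqref{eta_cond}. The argument for \eqref{int_ammiss} is identical after squaring: $\Pi_t^2\sigma^2(t,X_t)\le \eta^2(t,X_t)\sigma^2(t,X_t)$, and again decomposing over the states of $X$ reduces to $\sum_{i\in\mathcal X}\int_0^T \eta^2(t,i)\sigma^2(t,i)\,\ud t<\infty$, which is controlled by \eqref{eta_cond}.

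For condition \eqref{int_ammiss3}, I would invoke Assumption \ref{int_condit}(i) together with the standing hypothesis $K_2(t,i)<1$: this yields the uniform bound $K_1(t,X_{t-})\Theta_1(t)+K_2(t,X_{t-})\Theta_2(t)\le KM_1+M_2$. Hence
\begin{equation}
\esp{\int_0^T |\Pi_t|\bigl(K_1(t,X_{t-})\Theta_1(t)+K_2(t,X_{t-})\Theta_2(t)\bigr)\ud t}\le (KM_1+M_2)\sum_{i\in\mathcal X}\int_0^T \eta(t,i)\,\ud t,
\end{equation}
and the $L^1([0,T])$-norm of $\eta(\cdot,i)$ is dominated via Cauchy--Schwarz by $T^{1/2}\bigl(\int_0^T\eta^2(t,i)\,\ud t\bigr)^{1/2}$, which is finite by the square-integrability hypothesis on $\eta$.

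No real obstacle arises: the only subtlety is keeping the regime-switching and jump structures under control, but this is handled uniformly because $\mathcal X$ is finite and the intensities $\Theta_1,\Theta_2$ and the jump sizes $K_1,K_2$ are bounded by Assumption \ref{int_condit}(i) and the model hypothesis $K_2<1$. The hypotheses \eqref{eta_cond} and square-integrability of $\eta$ are precisely tailored to make the drift, diffusion, and jump-compensator contributions each integrable.
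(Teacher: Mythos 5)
Your proof is correct and follows essentially the same route as the paper's: bound $|\Pi_t|$ by $\eta(t,X_t)$, exploit the finiteness of $\mathcal X$ to reduce to the deterministic integrals in \eqref{eta_cond} (the paper uses a $\max$ over states where you use a sum — equivalent here), and dispatch \eqref{int_ammiss3} via the boundedness of $K_1\Theta_1+K_2\Theta_2$ from Assumption \ref{int_condit}(i). Your treatment is in fact slightly more explicit than the paper's, which handles \eqref{int_ammiss1} and \eqref{int_ammiss} in a single display and leaves the Cauchy--Schwarz step for \eqref{int_ammiss3} implicit.
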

\begin{proof}
We note that by \eqref{eta_cond}, we have
\begin{align}
    & \esp{ \int_0^T |\Pi_s| \left( (\mu(s,X_s) - r)+\Pi_s\sigma^2(s,X_s)\right) \ud s}\\
    & \qquad \le \esp{ \int_0^T \eta(s,X_s)\left( (\mu(s,X_s) - r)+\eta(s,X_s)\sigma^2(s,X_s)\right) \ud s}\\
    & \qquad \le \max_{i=1,\ldots,M}\int_0^T \eta(s,i)\left( (\mu(s,i) - r)+\eta(s,i)\sigma^2(s,i)\right) \ud s < \infty.
\end{align}
Finally, in view of Assumption \ref{int_condit}(i), condition  \eqref{int_ammiss3} is satisfied and this concludes the proof.
\end{proof}
\noindent Firstly, we consider the case where the insurer simply invests her/his wealth in the financial market, without writing the mortality contingent claim. Then, the goal is the following. \begin{problem}
	\label{prbesp}
To maximize the expected utility of its terminal wealth, i.e. to solve \begin{equation} 
	\sup_{\Pi \in \mathcal{A}} \mathbb{E} \Big{[} -e^{-\alpha W_{T}^\Pi} \Big{]} .
	\end{equation}  
\end{problem} 
\noindent Let $(t,w,i) \in [0,T] \times \R \times \mathcal{X}$. In a dynamic framework, we define the corresponding value function $\bar{V}$ by 
\begin{equation}
\label{Vbaruesp}
\bar{V}(t,w,i):=\sup_{\Pi \in \mathcal{A}_t} \mathbb{E} _{t,w,i} \Big{[} -e^{-\alpha W_{T}^{\Pi}(t,w)} \Big{]},
\end{equation}
where $ \mathbb{E} _{t,w,i}[\cdot]$ denotes the conditional expectation given $W_{t}^\Pi=w$ and $X_{t}=i$, and $\{W_{s}^{\Pi}(t,w),\ s \in [t,T]\}$ stands for the solution to equation \eqref{wealth} with initial condition $W_t^\Pi=w$.
Note that, since the coefficients $\mu$, $\sigma$, $K_1$ and $K_2$ only depend on $t$ and $i$, it is possible to absorb the stock price in the wealth and therefore to remove the variable corresponding to $S$.

Now, we suppose that the insurer invests her/his wealth in the market, having the opportunity to write a pure endowment contract with payoff given in \eqref{payoffPE}.
In this case, the goal of the insurer is the following.
\begin{problem}
	\label{prbcesp}
To maximize the expected utility of its terminal wealth, i.e. to solve \begin{equation} 
	\sup_{\Pi \in \mathcal{A}} \mathbb{E} \Big{[} -e^{-\alpha (W_{T}^\Pi-G_T)} \Big{]},
	\end{equation} 
	where $G_T$ is defined in \eqref{payoffPE}. 
\end{problem} 
\noindent Let $(t,w,\lambda,i) \in [0,T] \times \R \times \Rp \times \mathcal{X}$. We define the corresponding value function $V$ as 
\begin{equation}
\label{Vuesp}
V(t,w,\lambda,i):=\sup_{\Pi \in \mathcal{A}_t} \mathbb{E} _{t,w,\lambda,i} \Big{[} -e^{-\alpha (W_{T}^\Pi(t,w)-G_T)} \Big{]},
\end{equation} 
where $ \mathbb{E}_{t,w,\lambda,i} $ denotes the conditional expectation given $W_{t}^\Pi=w$, $\lambda_{t}=\lambda$ and $X_{t}=i$ and we implicitly condition on $G_{t}=K$.
\begin{definition}\label{indiff-price}
Given $W_{t}=w$, $\lambda_{t}=\lambda$ and $X_{t}=i$, the {\em indifference price}  or {\em reservation price} of the insurer related to the pure endowment contract is defined at any time $t \in [0,T]$ as the $\mathbb{G}$-adapted process  $P=\{P_t,\ t \in [0,T]\}$ implicit solution to the equation 
	\begin{equation}
	\label{przz}
		\bar{V}(t,w,i)=V(t,w+P_{t},\lambda,i).
	\end{equation} 
    \end{definition}
    \noindent This means that starting at time $t$ with wealth $w$, hazard rate $\lambda$ and scenario $i$, the insurer has the same maximal utility from selling the pure endowment contract for $P_t$ at time $t$ and solely trading on $(t,T]$ without writing the contract.
To find $P$ one has to characterize the value functions given in \eqref{Vbaruesp} and \eqref{Vuesp}, corresponding to the 
stochastic control problems without and with the insurance derivative.
\begin{remark}
We note that the control $\Pi=0$ is admissible and such that
$$
\mathbb{E} _{t,w,i}\left[e^{-\alpha W_T^0(t,w)}\right] < \infty, 
$$
for each $(t,w,i) \in [0,T] \times \R \times \mathcal{X}$.
$$
\mathbb{E} _{t,w,\lambda,i}\left[e^{-\alpha (W_T^0(t,w)-G_T)}\right] < \infty, 
$$
for each $(t,w,\lambda,i) \in [0,T] \times \R \times \Rp \times \mathcal{X}$. This implies that
\begin{equation}
    \esssup_{\Pi \in \mathcal A_t}\esp{-e^{-\alpha W_{T}^\Pi}} > - \infty, \quad \esssup_{\Pi \in \mathcal A_t}\esp{-e^{-\alpha (W_{T}^\Pi-G_T)}} > - \infty,\quad \P-\mbox{a.s.}, \ t \in [0,T],
\end{equation}
and as a consequence that
\begin{equation}
    \sup_{\Pi \in \mathcal A}\esp{-e^{-\alpha W_{T}^\Pi}} > -\infty, \quad \sup_{\Pi \in \mathcal A}\esp{-e^{-\alpha (W_{T}^\Pi-G_T)}} > -\infty.
\end{equation}
\end{remark}

\section{The optimal investment problems}\label{sec:hjb}

In this section, 
applying the classical stochastic control approach based on the Hamilton-Jacobi-Bellman (HJB) equation,
we characterize the optimal investment strategies and provide verification results for the value functions $\bar V$ and $V$ given in \eqref{Vbaruesp} and \eqref{Vuesp}, respectively.

\subsection{The pure investment problem}

Firstly, we consider the case where the insurer simply invests in the underlying financial market, 
so the corresponding value function $\bar{V}$ is given by \eqref{Vbaruesp}. 
Let us consider the HJB equation with final condition that the value function $\bar{V}$ is expected to solve, if sufficiently smooth:
\begin{equation}\label{VbarHJB}
\left \{ 
\begin{array}{ll} 
\sup_{\Pi \in \R} \bar {\mathcal L}_i^\Pi \bar{V}(t,w,i) = 0, \quad & \forall (t,w,i) \in [0,T) \times \R \times \mathcal X,\\
\bar{V}(T,w,i)  =-e^{-\alpha w}, \quad & \forall (w,i) \in \R \times \mathcal X,
\end{array} 
\right.
\end{equation}
where $\bar {\mathcal L}_i^\Pi$ denotes the Markov generator of $(W^\Pi,X)$ associated with a constant control $\Pi \in \R$. 
\begin{definition}
    The set $\mathcal D(\bar {\mathcal L}_i^\Pi)$ denotes the class of functions $f (\cdot,\cdot,i) \in C^1([0,T]) \times C^2(\R)$, for each $i \in \mathcal X$, such that for every constant $\Pi \in \R$, we have
    \begin{align} \label{domain1}
     \mathbb{E} \bigg[ \int_{0}^{T} \Big( \sigma(v,X_{v})\Pi \pd{f}{w}(v,W_{v}^\Pi,X_{v}) \Big)^{2} \ud v \bigg] < \infty,
    \end{align}
    and 
    \begin{align}\label{domain2}
        \esp{\int_0^T \int_{\R}  \left|f\big{(}v,W_{v}^\Pi,X_{v-\!}+h(X_{v-\!},z) \big{)} - f(v,W_{v}^\Pi,X_{v-\!})\right| m(\ud z) \ud v} < \infty,\\
        \esp{\int_0^T \left|f\big{(}v,W_{v-}^\Pi+ \Pi K_1(v,X_{v-}),X_{v-}) \big{)} - f(v,W_{v-}^\Pi,X_{v-})\right|  \Theta_1(v) \ud v } < \infty,\\
        \esp{\int_0^T \left|f\big{(}v,W_{v-}^\Pi- \Pi K_2(v,X_{v-}),X_{v-}) \big{)} - f(v,W_{v-}^\Pi,X_{v-})\right|  \Theta_2(v) \ud v} < \infty.
    \end{align}
\end{definition}

\begin{lemma}
	\label{wyx}
The stochastic process $(W^\Pi,X)$ is a Markov process on $(\Omega, \mathcal{F},\P;\mathbb{G})$, with infinitesimal generator $\bar {\mathcal L}_i^\Pi$ for all constant strategies $\Pi \in \R$, given by
 \begin{equation}
		\label{margen3}
		\begin{split}
			&\bar{\mathcal{L}}_i^\Pi f (t,w,i) = \pd{f}{t}(t,w,i)\\
			&  + \big[ rw + (\mu(t,i)-r)\Pi \big]\pd{f}{w}(t,w,i) + \frac{1}{2} \sigma^{2}(t,i)\Pi^{2} \pds{f}{w}(t,w,i) + \sum_{j \in \mathcal{X}}\! a_{ij} f(t,w,j) \\ &  + \Theta_1(t) \big{\{} \!\bar{V}(t,w+\Pi K_{1\!}(t,i),i) \! -\!\bar{V}(t,w,i) \! \big{\}} + \Theta_2(t) \big{\{} \! \bar{V}(t,w-\Pi K_2(t,i),i) \! -\! \bar{V}(t,w,i) \!\big{\}},
	\end{split}
	\end{equation}
	for every $i \in \mathcal X$. The domain of the generator $\bar {\mathcal L}_i^\Pi$ is $\mathcal D(\bar {\mathcal L}_i^\Pi)$, for each $i \in \mathcal X$.
    \end{lemma}
    \noindent In Appendix \ref{app:tech}, we provide the proof for the Markov property of the over-refined process $(W^\Pi,X,\Lambda)$ stated in Lemma \ref{wysx} below; one can use the same arguments to show the result in this simpler case.
\begin{remark}
Since the pair $(W^\Pi,X)$ is a Markov process,
any Markovian control is of the form $\Pi_t = \Pi(t,W_t^\Pi,X_t)$. The generator $\bar {\mathcal L}_i^\Pi f(t,w,i)$ associated to a general Markovian strategy can be easily obtained by replacing $\Pi$ with $\Pi(t, w, i)$ in \eqref{margen3}.
\end{remark}

\noindent Now, we consider the ansatz 
$\bar{V}(t,w,i)=-e^{-w \alpha e^{r(T-t)}} \varphi(t,i)$, 
with $(t,w,i) \in [0,T] \times \R \times \mathcal X$, for a suitable function $\varphi$, which is motivated by the following result.
\begin{proposition}\label{prop:value}
Assume that there exists a unique function $\varphi(\cdot,i)$, for each $i \in \mathcal X$, 
solution to the following Cauchy problem:
\begin{equation}
\label{PCvarphi}
\left\{
\begin{array}{ll}
\pd{\varphi}{t}(t,i) = H(t,\varphi(t,i)),\quad & t \in [0,T),\\
\varphi (T,i)=1, & 
\end{array}
\right.
\end{equation}
where 
\begin{equation}\label{eq:H}
	H(t,\varphi(t,i))= - \sum_{j \in \mathcal{X}} \varphi(t,j) a_{ij} - \varphi(t,i) \inf_{\Pi \in \R} \bar{\Psi}^\Pi(t,i),
\end{equation}
with the function $\bar{\Psi}^\Pi:[0,T] \times \mathcal X \to \R$ defined by 
\begin{equation}
\begin{split} \label{psibar}
\bar{\Psi}^\Pi(t,i) = & - \alpha e^{r(T-t)} (\mu(t,i)-r) \Pi + \frac{1}{2} \alpha^2 e^{2r(T-t)} \sigma^{2}(t,i) \Pi^2 \\ 
& + \Theta_1(t) \big( e^{-\alpha \Pi K_1(t,i) e^{r(T-t)}} -1 \big) + \Theta_2(t) \left( e^{\alpha \Pi K_2(t,i) e^{r(T-t)}} -1 \right).
\end{split}
\end{equation} 
Then, the function 
\begin{equation}\label{eq:valuefunction1}
\bar{V}(t,w,i)=-e^{-w \alpha e^{r(T-t)}} \varphi(t,i), 
\end{equation}
solves the HJB problem given in \eqref{VbarHJB}.
\end{proposition}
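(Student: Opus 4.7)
The plan is to carry out a direct verification: substitute the ansatz $\bar V(t,w,i) = -e^{-w\beta(t)}\varphi(t,i)$, with $\beta(t) := \alpha e^{r(T-t)}$, into the HJB equation \eqref{VbarHJB} and show that the resulting condition on $\varphi$ is precisely the Cauchy problem \eqref{PCvarphi}. First I would compute the derivatives. Using $\beta'(t)=-r\beta(t)$ one gets
\begin{align*}
\partial_t \bar V(t,w,i) &= -e^{-w\beta(t)}\partial_t\varphi(t,i)\;-\;rw\beta(t)e^{-w\beta(t)}\varphi(t,i),\\
\partial_w \bar V(t,w,i) &= \beta(t) e^{-w\beta(t)}\varphi(t,i),\qquad \partial^2_{ww}\bar V(t,w,i) = -\beta(t)^2 e^{-w\beta(t)}\varphi(t,i).
\end{align*}
For the jump terms only the structural identity $\bar V(t,w \pm \Pi K_l(t,i),i) = e^{\mp \Pi K_l(t,i)\beta(t)}\,\bar V(t,w,i)$ is needed, which follows immediately from the exponential form of the ansatz.

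Next I would plug these expressions into the generator $\bar{\mathcal L}_i^\Pi$ of \eqref{margen3} and factor out the common factor $-e^{-w\beta(t)}$ from every summand, including the Markov-chain part $\sum_{j} a_{ij}\bar V(t,w,j) = -e^{-w\beta(t)}\sum_j a_{ij}\varphi(t,j)$. The two contributions proportional to $rw\beta(t)\varphi(t,i)$ — one coming from $\partial_t\bar V$, the other from the drift term $rw\,\partial_w\bar V$ — cancel, and the remaining $\Pi$-dependent pieces reassemble exactly into the function $\bar\Psi^\Pi(t,i)$ of \eqref{psibar}. This yields the clean identity
\begin{equation*}
\bar{\mathcal L}_i^\Pi \bar V(t,w,i) \;=\; -e^{-w\beta(t)}\Big\{\partial_t\varphi(t,i) + \varphi(t,i)\bar\Psi^\Pi(t,i) + \sum_{j\in\mathcal X} a_{ij}\varphi(t,j)\Big\},
\end{equation*}
where the dependence on $w$ and on $\Pi$ has been decoupled.

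Finally, since $\bar V$ is intended to be negative, one may take $\varphi>0$; then the prefactor $-e^{-w\beta(t)}\varphi(t,i)$ in front of $\bar\Psi^\Pi$ is strictly negative, so the HJB equation $\sup_{\Pi\in\R}\bar{\mathcal L}_i^\Pi\bar V=0$ is equivalent to
\begin{equation*}
\partial_t\varphi(t,i) \;+\; \varphi(t,i)\inf_{\Pi\in\R}\bar\Psi^\Pi(t,i) \;+\; \sum_{j\in\mathcal X} a_{ij}\varphi(t,j) \;=\; 0,
\end{equation*}
which is precisely $\partial_t\varphi(t,i)=H(t,\varphi(t,i))$. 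The terminal condition reads $\bar V(T,w,i)=-e^{-\alpha w}\varphi(T,i)=-e^{-\alpha w}$, i.e. $\varphi(T,i)=1$, since $\beta(T)=\alpha$. The existence of such a $\varphi$ is granted by hypothesis, so the verification is complete. The argument is essentially routine chain-rule bookkeeping; the only delicate point I would flag is the positivity of $\varphi$ that permits turning the supremum into the infimum in \eqref{eq:H} — this should either be part of the standing assumption of existence in \eqref{PCvarphi} or deduced a posteriori from continuity and the terminal value $\varphi(T,\cdot)=1$.
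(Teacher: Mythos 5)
Your proof is correct and follows essentially the same route as the paper: substitute the ansatz into the HJB equation, factor out $-e^{-w\alpha e^{r(T-t)}}$, observe the cancellation of the $rw$-terms, and identify the remaining $\Pi$-dependent expression with $\bar\Psi^\Pi$, turning the supremum into an infimum. Your explicit remark on the positivity of $\varphi$ needed for that sign flip is a point the paper glosses over, but otherwise the two arguments coincide.
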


\begin{proof}
From the expression \eqref{eq:valuefunction1}, we can easily verify that the original HJB problem given in \eqref{VbarHJB} reads as follows
\begin{equation}\label{eq:varphi}
\begin{split}
& \pd{\varphi}{t}(t,i)  + \sum_{j \in \mathcal{X}} \varphi(t,j) a_{ij}  + \inf_{\Pi \in \R} \!  \Big{\{}  -\alpha e^{r(T-t)}\varphi(t,i) (\mu(t,i)-r) \Pi \\
& \quad + \frac{1}{2} \alpha^2 e^{2r(T-t)} \varphi(t,i)\sigma^{2}(t,i) \Pi^2
+ \varphi(t,i) \Theta_1(t) \big( e^{-\alpha \Pi K_1(t,i)e^{r(T-t)}} -1 \big)    \\
& \quad \quad + \varphi(t,i) \Theta_2(t) \big( e^{\alpha \Pi K_2(t,i)e^{r(T-t)}} -1 \big) \Big{\}}=0,
\end{split}
\end{equation}
for every $t \in [0,T)$,
with final condition $\varphi(T,i)=1$, for all $i \in \mathcal X$.
Thus, if we define the function $\bar{\Psi}^\Pi$ by means of expression \eqref{psibar},
equation \eqref{eq:varphi} can be written as
\begin{equation}
\pd{\varphi}{t}(t,i) + \sum_{j \in \mathcal{X}} \varphi(t,j) a_{ij} + \varphi(t,i) \inf_{\Pi \in \R} \bar{\Psi}^\Pi(t,i) =0
\end{equation} 
and we find out the problem \eqref{PCvarphi}.
\end{proof}

\noindent The previous result suggests to focus on the minimization  of the function \eqref{psibar}, that is the aim of the next subsection.

\subsubsection{Optimal investment strategy without the insurance derivative}
Now, we study the following minimization problem 
\begin{equation}	\label{min-prob}
\inf_{\Pi \in \R}{ \bar{\Psi}^\Pi}(t,i),
\end{equation} 
where the function $\bar{\Psi}^\Pi$ is introduced in \eqref{psibar}. 

\begin{proposition}\label{opwi}
The following equation 
\begin{equation}
\begin{split}\label{eqh}
 & \sigma^2(t,i) \alpha e^{r(T-t)} \Pi  - (\mu(t,i) - r)\\
 & \qquad = K_1(t,i) \Theta_1(t) e^{-\alpha \Pi K_1(t,i) e^{r(T-t)}} \!- K_2(t,i) \Theta_2(t) e^{\alpha \Pi K_2(t,i) e^{r(T-t)}}
\end{split}
\end{equation} 
admits at least a solution $\widehat \Pi(t,i)$ in $\R$ for any $(t,i) \in [0,T] \times \mathcal X$ and the minimization problem \eqref{min-prob} has a unique solution $\Pi^*(t,i) = \widehat \Pi(t,i)$, for all $(t,i) \in [0,T] \times \mathcal X$. 
\end{proposition}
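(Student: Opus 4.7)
The plan is to recognize that equation \eqref{eqh} is exactly the first-order optimality condition for the minimization problem \eqref{min-prob}, and then to derive both claims from convexity and coercivity of $\bar\Psi^\Pi$ as a function of $\Pi$. Concretely, I would fix $(t,i) \in [0,T] \times \mathcal X$ and study $F(\Pi) := \bar\Psi^\Pi(t,i)$ on $\R$. Since $F$ is the sum of an affine, a quadratic (with strictly positive leading coefficient thanks to $\sigma(t,i)>0$), and two strictly convex exponential terms, it is smooth in $\Pi$; computing $F'(\Pi)$ and dividing by $\alpha e^{r(T-t)} > 0$ shows that $F'(\widehat\Pi) = 0$ rearranges precisely to \eqref{eqh}.

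Next, I would prove strict convexity by computing
\begin{equation}
F''(\Pi) = \alpha^2 e^{2r(T-t)} \sigma^2(t,i) + \alpha^2 K_1^2(t,i) e^{2r(T-t)} \Theta_1(t) e^{-\alpha \Pi K_1(t,i) e^{r(T-t)}} + \alpha^2 K_2^2(t,i) e^{2r(T-t)} \Theta_2(t) e^{\alpha \Pi K_2(t,i) e^{r(T-t)}},
\end{equation}
which is strictly positive under the standing positivity assumptions on $\sigma$, $K_1$, $K_2$, $\Theta_1$, $\Theta_2$. Strict convexity immediately yields uniqueness of any minimizer and of any critical point of $F$.

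For existence I would establish coercivity by examining the two one-sided limits. As $\Pi \to +\infty$, the term $\Theta_2(t)\bigl(e^{\alpha\Pi K_2(t,i) e^{r(T-t)}}-1\bigr)$ grows exponentially and dominates the quadratic and the affine contributions, so $F(\Pi) \to +\infty$; symmetrically, as $\Pi \to -\infty$, the term $\Theta_1(t)\bigl(e^{-\alpha \Pi K_1(t,i) e^{r(T-t)}}-1\bigr)$ drives $F(\Pi) \to +\infty$. By continuity of $F$, the minimum is attained at some $\Pi^*(t,i) \in \R$, which by strict convexity is unique and must satisfy $F'(\Pi^*(t,i))=0$, i.e.\ \eqref{eqh}. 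Equivalently, one can argue directly on $F'$: it is continuous with $F'(\Pi) \to +\infty$ as $\Pi \to +\infty$ (from the $K_2,\Theta_2$ exponential) and $F'(\Pi) \to -\infty$ as $\Pi \to -\infty$ (from the $-K_1\Theta_1 e^{-\alpha\Pi K_1 e^{r(T-t)}}$ term), so by the Intermediate Value Theorem there exists $\widehat\Pi(t,i) \in \R$ solving \eqref{eqh}, and $F'$ is strictly increasing (by $F''>0$), giving uniqueness.

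There is no substantial obstacle here: the only thing that requires a moment of care is making sure the coercivity argument is genuinely two-sided (the exponential with $K_1$ dominates on the left, the one with $K_2$ on the right) and that $\sigma(t,i)>0$ is used only to make $F''$ strictly positive rather than being essential for coercivity. Everything else is a routine convex-analysis argument combined with the direct computation that $F'(\Pi)=0$ is \eqref{eqh}, which identifies $\Pi^*(t,i)$ with the solution $\widehat\Pi(t,i)$.
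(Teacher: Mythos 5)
Your proposal is correct and follows essentially the same route as the paper: compute the first and second $\Pi$-derivatives of $\bar\Psi^\Pi(t,i)$, observe that $F''>0$ gives strict convexity, and use the limits $F'(\Pi)\to\pm\infty$ as $\Pi\to\pm\infty$ together with the Intermediate Value Theorem to obtain existence of a critical point, which is then the unique minimizer; your ``equivalent'' direct argument on $F'$ is exactly the paper's proof. The coercivity-of-$F$ variant you lead with is a harmless repackaging of the same facts.
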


\begin{proof}
Firstly, we observe that $\bar{\Psi}^\Pi(t,i)$ is continuous with respect to $\Pi \in \R$, for every $(t,i) \in [0,T] \times \mathcal{X}$ and has continuous first and second order derivatives with respect to $\Pi \in \R$, which are respectively given by 
\begin{align}
& \pd{\bar{\Psi}^\Pi}{\Pi}(t,i)  = -\alpha e^{r(T-t)}(\mu(t,i) - r) + \sigma^2(t,i) \alpha^2 e^{2r(T-t)} \Pi \\
& \quad - \alpha e^{r(T-t)}K_1(t,i) \Theta_1(t) e^{-\alpha \Pi K_1(t,i) e^{r(T-t)}}+ \alpha e^{r(T-t)} K_2(t,i) \Theta_2(t) e^{\alpha \Pi K_2(t,i) e^{r(T-t)}},
\\  & \pds{\bar{\Psi}^\Pi}{\Pi}(t,i) = \alpha^2 e^{2r(T-t)}\sigma^2(t,i) + \alpha^2 e^{2r(T-t)}K_1^2(t,i) \Theta_1(t) e^{-\alpha \Pi K_1(t,i) e^{r(T-t)}} \\ &\quad  + \alpha^2 e^{2r(T-t)} K_2^2(t,i) \Theta_2(t) e^{\alpha \Pi K_2(t,i) e^{r(T-t)}}.
\end{align} 
Note that these derivatives are well defined and 
$\pds{\bar{\Psi}^\Pi}{\Pi}(t,i)>0$, for every $(t,i) \in [0,T] \times \mathcal{X}$; therefore, the function $\bar{\Psi}^\Pi(t,i)$ is strictly convex in $\Pi \in \R$.
Moreover, it is easy to check that, for any $(t,i) \in [0,T] \times \mathcal X$, we have 
\begin{equation}
	\lim_{\Pi \longrightarrow +\infty} \pd{\bar{\Psi}^\Pi}{\Pi}(t,i) \longrightarrow + \infty,\qquad \lim_{\Pi \longrightarrow -\infty} \pd{\bar{\Psi}^\Pi}{\Pi}(t,i) \longrightarrow - \infty. 
\end{equation} 
As a consequence, being $\pd{\bar{\Psi}^\Pi}{\Pi}(t,i)$ a continuous function in $\Pi \in \R$, there exists $\widehat \Pi (t,i)\in \R$ such that $\pd{\bar{\Psi}^\Pi}{\Pi}(t,i)=0$, for every $(t,i) \in [0,T] \times \mathcal{X}$, that is, \eqref{eqh} is satisfied. Since the function $\bar{\Psi}^\Pi(t,i)$ is strictly convex, the stationary point $\widehat \Pi (t,i)\in \R$ is unique and provides the unique minimizer $\Pi^*(t,i)= \widehat \Pi(t,i)$ on $\R$.
\end{proof}

\begin{remark}
We point out that $\Pi^*=\Pi^*(t,i)$, i.e. the solution of the problem \eqref{min-prob} depends on time and on the Markov chain, since it solves equation \eqref{eqh}. This means that the optimal investment strategy evolves over time and changes according to the different economic regimes. Moreover, 
we note that $\Pi^*$ does not depend on wealth, as usually happens when the investor's preferences are described by an exponential utility function.
\end{remark}
\begin{proposition}[Properties of $\Pi^*$]\label{properties}
The following condition is satisfied
\begin{equation}
  \min \left\{  0,\frac{\ln \left( \frac{\mu(t,i)-r}{M_2}\right)}{\alpha e^{r(T-t)}}  \right\} \leq  \Pi^*(t,i) \leq \frac{\mu(t,i) - r + KM_1}{\sigma^2(t,i)\alpha e^{r(T-t)}},
\end{equation}
for all $(t,i) \in [0,T] \times \mathcal X$, where  $K, M_1 \in \R^+$ are the constants limiting the functions $K_1$ and $\Theta_1$, respectively.
\end{proposition}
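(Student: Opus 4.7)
The approach exploits the strict convexity of $\bar{\Psi}^\Pi(t,i)$ in $\Pi$ established in the proof of Proposition \ref{opwi}: the derivative $\partial_\Pi\bar{\Psi}^\Pi(t,i)$ is strictly increasing with unique zero at $\Pi^*(t,i)$. Hence, to show $\Pi^*(t,i)\le\Pi_u$ for a candidate $\Pi_u$ it suffices to verify $\partial_\Pi\bar{\Psi}^\Pi(t,i)\big|_{\Pi=\Pi_u}\ge 0$; analogously, a lower bound $\Pi^*(t,i)\ge\Pi_\ell$ follows either from $\partial_\Pi\bar{\Psi}^\Pi(t,i)\big|_{\Pi=\Pi_\ell}\le 0$ or directly from the first-order condition \eqref{eqh}.

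For the upper bound, I set $\Pi_u:=(\mu(t,i)-r+CM_1)/(\sigma^2(t,i)\alpha e^{r(T-t)})$. Since $\mu(t,i)>r$, $\Pi_u>0$, so $e^{-\alpha\Pi_u K_1(t,i)e^{r(T-t)}}\le 1$. Substituting $\Pi_u$ into the explicit formula for $\partial_\Pi\bar{\Psi}^\Pi$ derived in the proof of Proposition \ref{opwi}, the linear-plus-quadratic contribution collapses to $\alpha e^{r(T-t)}CM_1$. Under the reading that $C$ bounds $K_1$ and $M_1$ bounds $\Theta_1$, the inequality $K_1(t,i)\Theta_1(t)\le CM_1$ dominates the negative $K_1$-exponential term, while the $K_2$-exponential term is non-negative; altogether $\partial_\Pi\bar{\Psi}^\Pi\big|_{\Pi=\Pi_u}\ge 0$, giving $\Pi^*(t,i)\le\Pi_u$.

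For the lower bound I split by the sign of $\Pi^*(t,i)$. If $\Pi^*(t,i)\ge 0$ the bound is immediate since $\min\{0,\cdot\}\le 0$. Otherwise $\Pi^*(t,i)<0$ makes the left-hand side of \eqref{eqh} strictly negative, which, after dropping the non-negative $K_1$-exponential on the right, forces
\begin{equation}
K_2(t,i)\Theta_2(t)e^{\alpha\Pi^*(t,i)K_2(t,i)e^{r(T-t)}}>\mu(t,i)-r.
\end{equation}
Using $K_2(t,i)\Theta_2(t)\le M_2$ (which follows from $K_2(t,i)<1$ and $\Theta_2(t)\le M_2$), taking logarithms, and rearranging yields a bound on $\alpha\Pi^*(t,i)K_2(t,i)e^{r(T-t)}$ in terms of $\ln((\mu(t,i)-r)/M_2)$, which is then reconciled with the clean form in the statement.

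The main obstacle is this last passage: the factor $K_2(t,i)\in(0,1)$ sitting inside the exponent means that a naive logarithm introduces a $1/K_2(t,i)$ factor, so recovering the stated bound $\ln((\mu(t,i)-r)/M_2)/(\alpha e^{r(T-t)})$ requires careful use of the sign of $\ln((\mu(t,i)-r)/M_2)$ in the regime $\Pi^*(t,i)<0$ combined with the strict inequality $K_2(t,i)<1$. This is where I expect the bulk of the technical care to lie, with the upper bound being essentially a one-line substitution in contrast.
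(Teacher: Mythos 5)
Your upper bound is correct and is essentially the paper's own argument: the paper works from the first-order condition \eqref{eqh} in the case $\Pi^*(t,i)\ge 0$, drops the non-negative $K_2$-term, bounds $K_1(t,i)\Theta_1(t)e^{-\alpha\Pi^* K_1(t,i)e^{r(T-t)}}\le CM_1$ and rearranges; your reformulation via the sign of $\partial_\Pi\bar{\Psi}^\Pi$ at the candidate point is an equivalent (and slightly cleaner) packaging of the same computation.

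The lower bound is where your proposal stops short, and the step you defer is not a technicality --- it fails. From $K_2(t,i)\Theta_2(t)e^{\alpha\Pi^* K_2(t,i)e^{r(T-t)}}>\mu(t,i)-r$ and $K_2(t,i)\Theta_2(t)\le M_2$ you obtain $\Pi^*(t,i)>\ln\big((\mu(t,i)-r)/M_2\big)/\big(\alpha K_2(t,i)e^{r(T-t)}\big)$. In the regime $\Pi^*(t,i)<0$ one necessarily has $\mu(t,i)-r<M_2$, so the logarithm is negative, and dividing a negative number by $K_2(t,i)\in(0,1)$ makes it \emph{more} negative: the bound you actually derive is strictly weaker than the one in the statement, and no bookkeeping of signs reverses this. (A near-degenerate example with $\sigma$, $K_1$, $\Theta_1$ small, $r=0$, $\alpha=1$, $K_2=1/2$, $\Theta_2=M_2=1$, $\mu-r=0.1$ gives $\Pi^*\approx 2\ln(0.2)\approx -3.22$, below the claimed $\ln(0.1)\approx -2.30$.) For what it is worth, the paper's own proof commits exactly this error: it asserts $K_2\Theta_2\, e^{\alpha\Pi^* K_2 e^{r(T-t)}}\le M_2\, e^{\alpha\Pi^* e^{r(T-t)}}$ (with an extraneous minus sign, evidently a typo), an inequality that requires $\Pi^*(K_2-1)\le 0$, i.e.\ $\Pi^*\ge 0$ --- precisely the excluded case. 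The honest conclusion of your (and the paper's) argument is the weaker lower bound in which $K_2(t,i)$, or a uniform positive lower bound for it, appears in the denominator.
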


\begin{proof}
By Proposition \ref{opwi} (we omit the dependence in $\Pi^*$ on $(t,i)$), we get the upper limit and the lower limit for $\Pi^*$.
If $\Pi^*$ is non-negative, we have
\begin{align}
 0 & =\sigma^2(t,i) \alpha e^{r(T-t)} \Pi^*  - (\mu(t,i) - r)  - K_1(t,i) \Theta_1(t) e^{-\alpha \Pi^* K_1(t,i) e^{r(T-t)}} \\
 & \qquad \qquad + K_2(t,i) \Theta_2(t) e^{\alpha \Pi^* K_2(t,i) e^{r(T-t)}}\\
 & > \sigma^2(t,i) \alpha e^{r(T-t)} \Pi^*  - (\mu(t,i) - r) - K_1(t,i) \Theta_1(t) e^{-\alpha \Pi^* K_1(t,i) e^{r(T-t)}} \\
 & \ge \sigma^2(t,i) \alpha e^{r(T-t)} \Pi^*  - (\mu(t,i) - r) - K M_1 e^{-\alpha \Pi^* K_1(t,i) e^{r(T-t)}}\\
 & \ge \sigma^2(t,i) \alpha e^{r(T-t)} \Pi^*  - (\mu(t,i) - r) - K M_1,
\end{align}
which implies
\begin{equation}
  \Pi^*(t,i) \leq \frac{\mu(t,i) - r + KM_1}{\sigma^2(t,i)\alpha e^{r(T-t)}}, 
\end{equation}
for all $(t,i) \in [0,T] \times \mathcal X$.
Otherwise, if $\Pi^*$ is non-positive, we get 
\begin{align}
	0&=\sigma^2(t,i) \alpha e^{r(T-t)} \Pi^*  - (\mu(t,i) - r)  - K_1(t,i) \Theta_1(t) e^{-\alpha \Pi^* K_1(t,i) e^{r(T-t)}} \\
    & \qquad \qquad + K_2(t,i) \Theta_2(t) e^{\alpha \Pi^* K_2(t,i) e^{r(T-t)}}\\
	& < - (\mu(t,i) - r) + K_2(t,i) \Theta_2(t) e^{\alpha \Pi^* K_2(t,i) e^{r(T-t)}} \\
	& \le - (\mu(t,i) - r) + M_2 e^{\alpha \Pi^* e^{r(T-t)}},
\end{align}	that leads to 
\begin{equation}
\Pi^*(t,i) \ge \frac{\ln \left( \frac{\mu(t,i)-r}{M_2}\right)}{\alpha e^{r(T-t)}}, \end{equation}
for all $(t,i) \in [0,T] \times \mathcal X$.
\end{proof}

\subsubsection{The Verification Theorem}
Now, we are ready to state the verification result.
\begin{theorem}[Verification Theorem] \label{thver} 
	Suppose that the Cauchy problem \eqref{PCvarphi} admits a classical solution $\varphi(\cdot,i) \in C^{1}\big{(} (0,T[\big{)} \cap C\big{(} [0,T] \big{)}$, for each $i \in \mathcal{X}$. Then, the function $\bar{V}:[0,T] \times \R \times \mathcal{X} \longrightarrow \R$ defined by 
	\begin{equation}
	\label{barV}
	\bar{V}(t,w,i)=-e^{-w \alpha e^{r(T-t)}} \varphi(t,i)
	\end{equation} 
	is the value function in \eqref{Vbaruesp}. Consequently, the strategy $\Pi_t^*=\Pi^*(t,X_t)$ described in Proposition \ref{opwi} is an optimal control.
\end{theorem}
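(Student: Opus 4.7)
The plan is to run the standard verification argument: fix $(t,w,i) \in [0,T] \times \R \times \mathcal{X}$, pick an arbitrary admissible $\Pi \in \mathcal{A}_t$, and apply Itô's formula to $\bar V(s,W_s^{\Pi},X_s)$ on $[t,T]$ using the candidate \eqref{barV}. Exploiting the representation of $X$ as a stochastic integral against $\mathcal{P}$ and compensating both the Poisson measure driving $X$ and the processes $N^1,N^2$, the Itô decomposition splits as
\begin{equation}
\bar V(s,W_s^{\Pi},X_s) = \bar V(t,w,i) + \int_t^s \bar{\mathcal{L}}_{X_u}^{\Pi_u}\bar V(u,W_u^{\Pi},X_u)\,\ud u + \mathcal{M}_s,
\end{equation}
where $\mathcal{M}=\{\mathcal{M}_s,\ s\in[t,T]\}$ collects the stochastic integrals against $Z^S$, against the compensated measures $\widehat{\mathcal{P}}$, and against $\ud N^l_u-\Theta_l(u)\ud u$ for $l=1,2$. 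Since by Proposition \ref{prop:value} the function $\bar V$ solves the HJB equation \eqref{VbarHJB}, the drift $\bar{\mathcal{L}}_{X_u}^{\Pi_u}\bar V(u,W_u^{\Pi},X_u)$ is pointwise non-positive for every admissible $\Pi$, and vanishes identically when $\Pi_u = \Pi^*(u,X_u)$ by Proposition \ref{opwi}.

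The second step is to upgrade $\mathcal{M}$ from a local martingale to a true martingale, so that taking expectations yields $\mathbb{E}_{t,w,i}[\bar V(T,W_T^{\Pi},X_T)] \le \bar V(t,w,i)$; combined with the terminal condition $\bar V(T,w,i)=-e^{-\alpha w}$, this produces the upper bound $\bar V(t,w,i) \ge \mathbb{E}_{t,w,i}[-e^{-\alpha W_T^{\Pi}(t,w)}]$ for every $\Pi \in \mathcal{A}_t$. A localizing sequence $\{\tau_n\}_{n \in \N}$ reduces $\mathcal M$ to a martingale up to $\tau_n$; to pass to the limit one dominates using the structure $\bar V(s,w,i) = -e^{-w\alpha e^{r(T-s)}}\varphi(s,i)$, the continuity (hence boundedness on $[0,T]\times\mathcal X$) of $\varphi$, the admissibility conditions \eqref{int_ammiss1}–\eqref{int_ammiss3}, Assumption \ref{int_condit}, and the boundedness of the jump sizes of $X$ (finite state space). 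Standard exponential martingale estimates for the wealth process, joined with dominated convergence/Fatou, close this step.

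For the reverse inequality, I would plug in the Markovian candidate $\Pi^*_u = \Pi^*(u,X_u)$ of Proposition \ref{opwi}. Admissibility of $\Pi^*$ is obtained from Proposition \ref{properties}: the explicit two-sided bounds on $\Pi^*(t,i)$ in terms of $\mu,\sigma,K_1,K_2,\Theta_1,\Theta_2$ and the constants $C,M_1,M_2$ of Assumption \ref{int_condit} produce a deterministic function $\eta(t,i)$ dominating $|\Pi^*(t,X_t)|$ and satisfying \eqref{eta_cond}, so Proposition \ref{suff} applies and $\Pi^* \in \mathcal{A}_t$. With $\Pi^*$ admissible, the drift in the Itô decomposition is identically zero, and the same martingale argument yields $\bar V(t,w,i) = \mathbb{E}_{t,w,i}[-e^{-\alpha W_T^{\Pi^*}(t,w)}]$, which combined with the previous inequality identifies $\bar V$ with the value function \eqref{Vbaruesp} and establishes optimality of $\Pi^*$.

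The main obstacle I expect is the true-martingale step for $\mathcal{M}$: one must control exponential moments of $W^{\Pi}$ simultaneously for every admissible $\Pi$ (not just bounded ones) while handling the mixture of continuous noise, the two Poisson processes, and the jumps of $X$. The jump part $\Theta_l(u)\bigl(e^{\mp\alpha \Pi_u K_l(u,X_u)e^{r(T-u)}}-1\bigr)\varphi(u,X_u)$ in the compensator does not vanish a priori, so the argument must exploit the non-negativity of $\varphi$ and the negative sign of $\bar V$ to obtain a one-sided bound sufficient for Fatou, rather than a two-sided one; this is where the exponential structure of the ansatz \eqref{barV} is really used.
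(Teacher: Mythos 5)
Your proposal is correct and follows essentially the same route as the paper: Itô's formula applied to the ansatz \eqref{barV}, non-positivity of the drift from the HJB equation \eqref{VbarHJB} via Proposition \ref{prop:value}, a localization at the level $W^\Pi_s<-n$ followed by a uniform-integrability/limit argument, and admissibility of $\Pi^*$ deduced from the two-sided bounds of Proposition \ref{properties} together with Proposition \ref{suff}. The only cosmetic difference is that the paper closes the limit step by a second-moment/uniform-integrability bound on the stopped value function rather than your Fatou-type one-sided domination, but the substance is identical.
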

\noindent The proof can be found in Appendix \ref{app:tech}.
\begin{remark}
By Theorem \ref{thver}, the value function \eqref{Vbaruesp} can be characterized as a transformation of the solution $\varphi$ to a certain system of ODEs with a particular terminal condition. As regards existence and uniqueness of a solution to this specific Cauchy problem \eqref{PCvarphi}, we refer to \citet[Theorem VII, Chapter II:6]{WALTER2} or to \citet[Section $6$]{BARAN}. 
According to \citet{WALTER2}, if $H$ given in \eqref{eq:H} is a locally Lipschitz function with respect to the second variable, uniformly in $t$, we get that there exists a unique solution $\varphi(t,i)$, for every $t \in [0,T]$, for all $i \in \mathcal X$. Requiring that $\mu$, $\sigma$, $K_1$ and $K_2$ are continuous functions is a sufficient condition for the regularity of function $H$ and, as a consequence, the smoothness of $\varphi$.
Otherwise, \eqref{PCvarphi} can be seen as a trivial case of the Cauchy problem faced by \citet{BARAN}. Assuming that $\mu(\cdot,i)$ and $\sigma(\cdot,i)$ are continuous functions in $t \in [0,T]$, for all $i \in \mathcal{X}$, guarantees that $\inf_{\Pi \in \R} \bar{\Psi}(t,i)$ is bounded with respect to the first variable and thus all required hypotheses are satisfied.
\end{remark}

\noindent The next result presents the optimal investment strategy for Problem \ref{prbesp}.
\begin{proposition}
Assume existence and uniqueness of a classical solution to the HJB equation with final condition \eqref{VbarHJB}. Moreover, suppose that for all $(t,i) \in [0,T] \times \mathcal X$,
\begin{equation}\label{sigma_IP}
    \sigma(t,i) > \sigma > 0.
\end{equation}
Then, the process $\{\Pi^*(t,i),\ t \in [0,T]\}$ characterized in Proposition \ref{opwi} provides the optimal investment strategy for Problem \ref{prbesp}.
\end{proposition}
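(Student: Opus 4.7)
The plan is to invoke Theorem \ref{thver}: once the feedback process $\{\Pi^*(t,X_t),\ t \in [0,T]\}$ is shown to be admissible, its optimality for Problem \ref{prbesp} follows at once from the verification argument there (which tacitly requires $\Pi^* \in \mathcal A$ in order to conclude attainment of the supremum in \eqref{Vbaruesp}). So the only substantive step is to verify $\Pi^*(\cdot,X_\cdot) \in \mathcal A$, and the natural route is via the sufficient admissibility criterion of Proposition \ref{suff}.

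To construct the dominating function $\eta$ required by Proposition \ref{suff}, I would use the two-sided bound in Proposition \ref{properties}. For the upper side, Assumption \ref{int_condit}(ii) gives $\mu(t,i) - r \leq C\sigma(t,i)$, and combined with the standing hypothesis $\sigma(t,i) > \sigma > 0$ this yields
\[
\Pi^*(t,i) \leq \frac{\mu(t,i) - r + CM_1}{\sigma^2(t,i)\alpha e^{r(T-t)}} \leq \frac{C}{\sigma \alpha} + \frac{CM_1}{\sigma^2 \alpha},
\]
a uniform constant on $(t,i) \in [0,T] \times \mathcal X$. For the lower side, the bound is either zero or of the form $\ln((\mu(t,i)-r)/M_2)/(\alpha e^{r(T-t)})$; under continuity of $\mu$ on the compact set $[0,T] \times \mathcal X$ (the natural regularity context in which a classical solution to the HJB system exists), the strict positivity of $\mu - r$ delivers a uniform lower bound on $\mu - r$, hence a uniform upper bound on the absolute value of the logarithm, so that $|\Pi^*(t,i)| \leq \eta$ for a finite constant $\eta$. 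Taking $\eta$ constant makes it trivially square-integrable on $[0,T]$ and reduces condition \eqref{eta_cond} to the finiteness of $\int_0^T (\mu(s,i)-r)\ud s$ and $\int_0^T \sigma^2(s,i)\ud s$ for each $i \in \mathcal X$, both of which follow from boundedness of the coefficients on the compact domain. Proposition \ref{suff} then gives $\Pi^*(\cdot,X_\cdot) \in \mathcal A$, and Theorem \ref{thver} closes the argument.

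The hard part will be controlling the negative range of $\Pi^*$: as $\mu(t,i) - r \to 0^+$, the logarithmic lower bound of Proposition \ref{properties} can become arbitrarily large in absolute value, and the new hypothesis $\sigma(t,i) > \sigma > 0$ does not address this behavior directly. The cleanest resolution is to lean on continuity of $\mu$ over $[0,T] \times \mathcal X$, which is compact and so automatically turns the strict positivity $\mu - r > 0$ into a strictly positive infimum. If one wished to weaken this regularity, one would have to carry out more delicate estimates exploiting the integrability of $x|\ln x|$ near zero in order to bound $\int_0^T |\Pi^*_s|(\mu(s,X_s)-r)\ud s$ and the analogous $\sigma^2$--weighted term directly; this would substitute a rate-of-approach condition on $\mu - r$ for the continuity hypothesis.
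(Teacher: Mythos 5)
Your argument follows essentially the same route as the paper's proof: both verify admissibility of $\Pi^*$ via the sufficient criterion of Proposition \ref{suff}, taking $\eta$ to dominate the two-sided bounds of Proposition \ref{properties}, and then conclude by Theorem \ref{thver}. Your additional discussion of the logarithmic lower bound (and the need for $\mu-r$ to stay away from zero, or else for an $x|\ln x|$-type integrability estimate) addresses a point the paper's proof passes over in silence, but it does not change the structure of the argument.
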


\begin{proof}
Let
\begin{equation}
    \eta(t,i)=\max{ \left\{  \frac{\left|\ln \left( \frac{\mu(t,i)-r}{M_2}\right) \right|}{\alpha e^{r(T-t)}} ,\frac{\mu(t,i) - r + CM_1}{\sigma^2(t,i)\alpha e^{r(T-t)}}\right\} } , \quad (t,i) \in [0,T] \times \mathcal X.
\end{equation}
We show that conditions \eqref{eq:pi-eta} and \eqref{eta_cond} in Proposition \ref{suff} are satisfied. By Proposition \ref{properties}, we immediately have $\Pi^*(t,X_t) \leq \eta(t,X_t)$ and $\Pi^*(t,X_t) \geq -\eta(t,X_t)$, for every $t \in [0,T]$. Moreover, by \eqref{sigma_IP} and Assumption \ref{int_condit}, we get condition \eqref{eta_cond}. Then, the process $\{\Pi^*(t,i),\ t \in [0,T]\}$ is an admissible investment strategy and the statement follows by applying the Verification Theorem \ref{thver} and Proposition \ref{opwi}.
\end{proof}

\subsection{The investment problem with the insurance derivative}

Now, we suppose that the insurer can write a pure endowment contract, whose payoff is given in \eqref{payoffPE}.

\noindent The following result ensures that the financial-insurance model outlined in Section \ref{sec:model} has a Markovian structure, i.e. the vector process $(W^\Pi,\Lambda,X)$ is a $(\mathbb G,\P)$-Markov-process.  
Let $\mathcal L_i^\Pi$ denote the Markov generator of $(W^\Pi,\Lambda,X)$ associated with a constant control $\Pi \in \R$. 
\begin{definition}
    The set $\mathcal D(\mathcal L_i^\Pi)$ denotes the class of functions $f (\cdot,\cdot,\cdot,i) \in C^1([0,T]) \times C^2(\R \times \R^+)$, for each $i \in \mathcal X$, such that for every constant $\Pi \in \R$, we have
    \begin{align} \label{ip1mkv}
     \mathbb{E} \bigg[ \int_{0}^{T} \Big( \sigma(v,X_{v})\Pi \pd{f}{w}(v,W_{v}^\Pi,\lambda_{v},X_{v}) \Big)^{2} \ud v \bigg] < \infty,\\
\mathbb{E} \bigg[ \int_{0}^{T} \Big(  c(v,\lambda_{v})\lambda_v\pd{f}{\lambda}(v,W_{v}^\Pi,\lambda_{v},X_{v}) \Big)^{2} \ud v \bigg] < \infty,  
    \end{align}
    and 
    \begin{align}\label{ip2mkv}
       & \esp{\int_0^T \int_{\R}  \left|f\big{(}v,W_{v}^\Pi,\lambda_{v},X_{v-\!}+h(X_{v-\!},z) \big{)} - f(v,W_{v}^\Pi,\lambda_{v},X_{v-\!})\right| m(\ud z) \ud v} < \infty,\\
        &\esp{\int_0^T \left|f\big{(}v,W_{v-}^\Pi+ \Pi K_1(v,X_{v-}),\lambda_{v},X_{v-}) \big{)} - f(v,W_{v-}^\Pi,\lambda_{v},X_{v-})\right|  \Theta_1(v) \ud v } < \infty,\\
        &\esp{\int_0^T \left|f\big{(}v,W_{v-}^\Pi- \Pi K_2(v,X_{v-}),\lambda_{v},X_{v-}) \big{)} - f(v,W_{v-}^\Pi,\lambda_{v},X_{v-})\right|  \Theta_2(v) \ud v} < \infty.
    \end{align}
\end{definition}
\begin{lemma}
	\label{wysx}
The stochastic process $(W^\Pi,\Lambda,X)$ is a Markov process on $(\Omega, \mathcal{F},\P;\mathbb{G})$, with infinitesimal generator $\mathcal L_i^\Pi$ for all constant strategies $\Pi \in \R$ given by   
	\begin{equation}
	\label{margen4}
	\begin{split}
	& \mathcal{L}_i^\Pi f (t,w,\lambda,i) = \pd{f}{t}(t,w,\lambda,i) + \big[ rw + (\mu(t,i)-r)\Pi \big]\pd{f}{w}(t,w,\lambda,i) \\
    & \quad + b(t,\lambda) \lambda \pd{f}{\lambda}(t,w,\lambda,i) + \frac{1}{2} \sigma^2(t,i)\Pi^{2} \pds{f}{w}(t,w,\lambda,i) + \frac{1}{2}c^{2}(t,\lambda)\lambda^2\pds{f}{\lambda}(t,w,\lambda,i) \\
    & \quad + \sum_{j \in \mathcal{X}}\! a_{ij} f(t,w,\lambda,j) + \Theta_1(t) \big{\{} V(t,w+\Pi K_1(t,i),\lambda,i) -V(t,w,\lambda,i) \big{\}} \\
    & \quad + \Theta_2(t) \big{\{} V(t,w-\Pi K_2(t,i),\lambda,i) -V(t,w,\lambda,i) \big{\}},
	\end{split}
	\end{equation}
for every $i \in \mathcal X$. The domain of the generator $\mathcal L_i^\Pi$ is $\mathcal D(\mathcal L_i^\Pi)$, for each $i \in \mathcal X$.
\end{lemma}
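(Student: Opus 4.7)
The plan is to prove the two claims of the lemma separately: first the Markov property of the triple $(W^\Pi,\Lambda,X)$, and then the identification of its infinitesimal generator via It\^{o}'s formula.

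For the Markov property, I would argue as follows. Fix a constant strategy $\Pi \in \R$ and recall the decomposition $\mathbb{G} = \bF \vee \bF^I$ with $\bF$ and $\bF^I$ independent. The process $X$ is driven by the Poisson random measure $\mathcal P$ in $\bF$, the Brownian motion $Z^S$ and the Poisson processes $N^1, N^2$ live in $\bF$, while $Z^\Lambda$ lives in $\bF^I$. Hence the pair $(W^\Pi, X)$ satisfies an SDE with coefficients depending only on $(t, W_t^\Pi, X_t)$, driven by $\bF$-noise, and is therefore an $(\bF,\P)$-Markov process; similarly $\Lambda$ is an $(\bF^I,\P)$-Markov process by \eqref{y}. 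By the independence of the driving noises (equivalently, of $\bF$ and $\bF^I$), the joint process $(W^\Pi,\Lambda,X)$ is $(\mathbb G, \P)$-Markov; this follows from a standard argument of factoring the conditional expectation over independent $\sigma$-algebras.

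To identify the generator, I would apply It\^{o}'s formula for jump-diffusions to $f(s, W_s^\Pi, \lambda_s, X_s)$ for $f \in \mathcal D(\mathcal L_i^\Pi)$, using the representation \eqref{X} of $X$ as a stochastic integral against $\mathcal P$, the SDE \eqref{wealth} for $W^\Pi$, and the SDE \eqref{y} for $\Lambda$. The continuous semimartingale part contributes the drift $\partial_t f + [rw+(\mu-r)\Pi]\partial_w f + b(t,\lambda)\lambda \partial_\lambda f$ together with the diffusion terms $\tfrac12\sigma^2\Pi^2 \partial_w^2 f + \tfrac12 c^2(t,\lambda)\lambda^2 \partial_\lambda^2 f$; there is no mixed term because $Z^S$ and $Z^\Lambda$ are independent. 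The jumps of $X$ (driven by $\mathcal P$) are almost surely disjoint from those of $N^1, N^2$ (since all driving Poisson processes are independent), so the jump part splits cleanly into the $X$-contribution $\sum_{j \in \mathcal X} a_{ij} f(t,w,\lambda,j)$ (after compensating $\mathcal P$ with intensity $m(\ud z)\ud t$ and summing over the intervals $\Delta_{ij}$) and the $N^1, N^2$-contributions $\Theta_l(t)\{f(t, w \pm \Pi K_l, \lambda, i) - f(t,w,\lambda,i)\}$, $l=1,2$. Matching these terms reproduces the expression \eqref{margen4} for $\mathcal L_i^\Pi f$.

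It then remains to verify Dynkin's identity $\mathbb E_{t,w,\lambda,i}[f(t+h, W_{t+h}^\Pi, \lambda_{t+h}, X_{t+h})] - f(t,w,\lambda,i) = \mathbb E_{t,w,\lambda,i}[\int_t^{t+h} \mathcal L_i^\Pi f(s,\cdots)\ud s]$. The remaining terms in the It\^{o} decomposition are stochastic integrals against $Z^S$, $Z^\Lambda$, the compensated measure $\hat{\mathcal P}$, and the compensated Poisson processes $\{\ud N^l_v - \Theta_l(v)\ud v\}$; the four integrability requirements in the definition of $\mathcal D(\mathcal L_i^\Pi)$ are precisely tailored so that each of these is a true $(\mathbb G, \P)$-martingale (rather than merely local), and so taking conditional expectations kills them, yielding Dynkin's formula and confirming that $\mathcal L_i^\Pi$ with domain $\mathcal D(\mathcal L_i^\Pi)$ is the generator.

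The only delicate point I anticipate is the handling of the $X$-jump term: one must check that, after applying It\^{o}'s formula to the representation \eqref{X} and compensating with $m(\ud z)\ud t$, integrating the jumps $f(t,w,\lambda,i+h(i,z)) - f(t,w,\lambda,i)$ over $z \in \R$ against the Lebesgue measure indeed produces $\sum_{j \in \mathcal X} a_{ij} f(t,w,\lambda,j)$, using that the intervals $\{\Delta_{ij}\}_{j \neq i}$ have length $a_{ij}$ and that $h(i,z) = 0$ outside $\bigcup_{j \neq i} \Delta_{ij}$, so the diagonal term emerges from $a_{ii} = -\sum_{j \neq i} a_{ij}$. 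Everything else is bookkeeping once the independence of the jump sources and the integrability hypotheses \eqref{ip1mkv}--\eqref{ip2mkv} are in place.
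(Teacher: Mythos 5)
Your proposal is correct and follows essentially the same route as the paper: apply It\^{o}'s formula to $f(t,W_t^\Pi,\lambda_t,X_t)$, read off the generator from the drift, and use the integrability conditions \eqref{ip1mkv}--\eqref{ip2mkv} to conclude that the remaining stochastic integrals (against $Z^S$, $Z^\Lambda$, $\hat{\mathcal P}$, and the compensated Poisson processes) are true $(\mathbb G,\P)$-martingales, which yields Dynkin's formula. You are in fact somewhat more complete than the paper, which leaves the Markov property itself implicit and does not spell out the derivation of the $\sum_{j}a_{ij}f(t,w,\lambda,j)$ term from the Poisson-random-measure representation of $X$.
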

\noindent The proof is postponed to Appendix \ref{app:tech}. 

\noindent Let us consider the HJB equation that the value function $V$ is expected to solve, if sufficiently smooth:
\begin{equation}\label{VHJB}
\left \{ 
\begin{array}{ll} 
& \sup_{\Pi \in \R} {\mathcal L}_i^\Pi V(t,w,\lambda,i) + \lambda \big( \bar{V}(t,w,i)- V(t,w,\lambda,i) \big) = 0, \\ 
& \qquad \qquad \qquad \forall (t,w,\lambda,i) \in [0,T) \times \R \times \Rp \times \mathcal X,\\
& V(T,w,\lambda,i)=-e^{-\alpha(w-K)} \\
& \qquad \qquad  \qquad \forall (w,\lambda,i) \in \R \times \Rp \times \mathcal X,
\end{array} 
\right.
\end{equation}
      How to derive the HJB equation in \eqref{VHJB} is shown in Appendix \ref{app:HJB}.
      
\noindent Now, 
let us introduce the following ansatz 
\begin{equation}\label{V}
	V(t,w,\lambda,i)=-e^{-w \alpha e^{r(T-t)}} \varphi(t,i) \phi(t,\lambda),
\end{equation}
with $(t,w,\lambda,i) \in [0,T] \times \R \times \Rp \times \mathcal{X}$, where $\varphi$ solves \eqref{PCvarphi}, while the function $\phi$ is non-negative and does not depend on $w$. 

\noindent From \eqref{V}, replacing all the derivatives and performing some computations, 
problem \eqref{VHJB} reduces to 
\begin{equation}
\label{PCphi}
\left \{\begin{array}{ll}
\pd{\phi}{t}(t,\lambda)+b(t,\lambda)\lambda \pd{\phi}{\lambda}(t,\lambda) + \ds \frac{1}{2}c^{2}(t,\lambda)\lambda^2 \pds{\phi}{\lambda}(t,\lambda) - \lambda (\phi(t,\lambda) -1) = 0, \\
\quad  \forall (t,\lambda) \in [0,T) \times \R^+,  \\
\phi(T,\lambda)=e^{\alpha K},\quad \forall \lambda \in \R^+.
\end{array} \right.
\end{equation}	

\noindent We observe that the PDE in \eqref{PCphi} is linear and a solution exists under suitable conditions on model coefficients; see, e.g. \citet[Theorem 5.3]{pham1998optimal} or \citet[Theorem 1]{colaneri2019classical}.

Clearly, if the function $\phi$ is a classical solution of the Cauchy problem
\eqref{PCphi},
then $V(\cdot,\cdot,\cdot,i) \in C^{1,2,2}([0,T] \times \R \times \Rp)$, for each $i \in \mathcal{X}$ and we have that $V(t,w,\lambda,i)=-e^{-w \alpha e^{r(T-t)}} \varphi(t,i) \phi(t,\lambda)$ solves the original HJB equation given in \eqref{VHJB}. Now, we can state the verification result. 
\begin{theorem}[Verification Theorem] \label{thver2} 
	Let $\varphi(\cdot,i) \in C^{1}\big{(} (0,T)\big{)} \cap C\big{(} [0,T] \big{)}$ and $\phi(\cdot,\cdot) \in C^{1}\big{(} (0,T) \times \Rp \big{)} \cap C\big{(} [0,T] \times \Rp\big{)}$, for each $i \in \mathcal{X}$, be classical solutions of the Cauchy problems \eqref{PCvarphi} and \eqref{PCphi}, respectively. 
Then, the function $V:[0,T] \times \R \times \Rp \times \mathcal{X} \longrightarrow \R$ defined by \eqref{V}
is the value function in \eqref{Vuesp}.
Consequently, the strategy $\Pi_t^*=\Pi^*(t,X_t)$ described in Proposition \ref{opwi} is an optimal control.
\end{theorem}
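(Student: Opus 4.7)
The plan is to adapt the argument of Theorem~\ref{thver} to the enlarged Markov state $(W^\Pi, \Lambda, X, D)$ by applying Itô's formula to a candidate value process that aggregates the pre-death and post-death regimes, and then exploiting the HJB system \eqref{VHJB} together with \eqref{VbarHJB} to obtain a supermartingale-type inequality that becomes an equality at $\Pi^*(t,X_t)$ from Proposition~\ref{opwi}. Under the stated regularity of $\varphi$ and $\phi$, the ansatz \eqref{V} belongs to $C^{1,2,2}([0,T]\times\R\times\R^+)$ for each $i\in\mathcal X$ and, by the derivation preceding \eqref{PCphi}, it solves \eqref{VHJB}. Since $\tau$ is a doubly stochastic random time with $\bF^I$-intensity $\Lambda$, the process $M^D_s := D_s - \int_0^s \lambda_v(1-D_{v-})\,\ud v$ is a $(\mathbb G,\P)$-martingale. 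I would then introduce
\begin{equation}
Y_s^\Pi := V\bigl(s, W_s^\Pi, \lambda_s, X_s\bigr)(1-D_s) + \bar V\bigl(s, W_s^\Pi, X_s\bigr) D_s,\qquad s\in [t,T],
\end{equation}
which equals $V(t,w,\lambda,i)$ at $s=t$ on the conditioning event $\{\tau > t\}$, and $-e^{-\alpha(W_T^\Pi - G_T)}$ at $s=T$ thanks to the terminal conditions in \eqref{VHJB} and \eqref{VbarHJB} together with $G_T = K(1-D_T)$.

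Next, I would apply Itô's formula on $[t,T]$ to $Y^\Pi$. On $\{D_{s-}=0\}$ the continuous and $X$-jump parts contribute the generator $\mathcal L_{i}^\Pi V$, and on $\{D_{s-}=1\}$ they contribute $\bar{\mathcal L}_{i}^\Pi \bar V$; the jump of $D$ at rate $\lambda_s(1-D_{s-})$ turns $V$ into $\bar V$ and, after compensation, produces the term $\lambda_s(1-D_{s-})(\bar V - V)$ that features in \eqref{VHJB}. Grouping drifts yields
\begin{equation}
Y_T^\Pi = Y_t^\Pi + \int_t^T \Bigl\{(1-D_{s-})\bigl[\mathcal L_{X_s}^{\Pi_s} V + \lambda_s(\bar V - V)\bigr] + D_{s-}\,\bar{\mathcal L}_{X_s}^{\Pi_s} \bar V\Bigr\}\,\ud s + \widetilde M_T,
\end{equation}
where by \eqref{VHJB} and \eqref{VbarHJB} the integrand is pointwise non-positive for every $\Pi\in\mathcal A_t$ and vanishes identically at $\Pi^*(s,X_s)$, while $\widetilde M$ collects stochastic integrals against $Z^S$, $Z^\Lambda$, $\hat{\mathcal P}$, $N^l - \int \Theta_l\,\ud v$ ($l=1,2$) and $M^D$, and is a $(\mathbb G,\P)$-local martingale.

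Since $\widetilde M$ is only local, I would then localize via $\tau_n := \inf\{s\in[t,T] : W_s^\Pi < -n\}$ exactly as in the proof of Theorem~\ref{thver} and take conditional expectations at $T\wedge\tau_n$ to obtain $\mathbb E_{t,w,\lambda,i}[Y_{T\wedge\tau_n}^\Pi]\le V(t,w,\lambda,i)$, with equality when $\Pi=\Pi^*$. The hardest step will be uniform integrability of $\{Y_{T\wedge\tau_n}^\Pi\}_n$: writing explicitly $V = -e^{-w\alpha e^{r(T-t)}}\varphi\phi$ and $\bar V = -e^{-w\alpha e^{r(T-t)}}\varphi$, this reduces to controlling the second moment of $e^{-2\alpha W_{T\wedge\tau_n}^\Pi e^{r(T-T\wedge\tau_n)}}$ uniformly in $n$, multiplied by the uniform upper bounds of $\varphi$ on $[0,T]\times\mathcal X$ and of $\phi$ on $[0,T]\times\R^+$ (which follow from the terminal conditions in \eqref{PCvarphi} and \eqref{PCphi} and the linearity of those Cauchy problems). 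Finiteness of the wealth moment in turn comes from $\Pi\in\mathcal A_t$, Assumption~\ref{int_condit} and the explicit representation \eqref{def:solwealth}.

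Finally, passing $n\to +\infty$ by dominated convergence and using $Y_T^\Pi = -e^{-\alpha(W_T^\Pi - G_T)}$ gives $\mathbb E_{t,w,\lambda,i}[-e^{-\alpha(W_T^\Pi - G_T)}]\le V(t,w,\lambda,i)$ for every $\Pi\in\mathcal A_t$. Equality at $\Pi^*_s = \Pi^*(s,X_s)$ follows from the pointwise vanishing of the drift at this control and its admissibility, which is proved exactly as in the analogous statement for Problem~\ref{prbesp}. This simultaneously identifies $V$ with the value function in \eqref{Vuesp} and proves optimality of $\Pi^*$.
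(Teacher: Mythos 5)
Your proposal is correct in its overall architecture and follows the same skeleton as the paper's proof: verify that the ansatz \eqref{V} solves \eqref{VHJB}, apply It\^o's formula, localize, obtain a supermartingale-type inequality, establish uniform integrability of the stopped terminal values, pass to the limit, and recover equality at $\Pi^*$ using Proposition \ref{opwi}. Where you genuinely diverge is in the treatment of the mortality risk. The paper applies It\^o's formula to $V(s,W_s^\Pi,\lambda_s,X_s)$ alone and inserts the term $\int_t^T\lambda_v(\bar V-V)\,\ud v$ directly into the drift (consistently with the heuristic derivation of \eqref{VHJB} in Appendix \ref{app:HJB}), then concludes $\mathbb{E}_{t,w,\lambda,i}[-e^{-\alpha(W_T^\Pi-K)}]\le V(t,w,\lambda,i)$ from the terminal condition $V(T,w,\lambda,i)=-e^{-\alpha(w-K)}$. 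Your aggregated process $Y_s^\Pi=V(1-D_s)+\bar V D_s$ together with the compensated death martingale $M^D$ makes rigorous exactly where that $\lambda(\bar V-V)$ term comes from, and it buys you the identity $Y_T^\Pi=-e^{-\alpha(W_T^\Pi-G_T)}$ with the \emph{actual} payoff $G_T=K(1-D_T)$ rather than the on-survival payoff $K$; this is the cleaner route and is the standard device in the Young--Zariphopoulou literature the paper builds on. The price is that you must also invoke the supermartingale inequality $\bar{\mathcal L}^\Pi_i\bar V\le 0$ from \eqref{VbarHJB} on the post-death stratum $\{D_{s-}=1\}$, which you correctly do.

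One concrete gap: your localizing sequence $\tau_n$ caps only the wealth from below, whereas the paper's stopping times additionally cap $\lambda_s$, $\phi(s,\lambda_s)$ and $\pd{\phi}{\lambda}(s,\lambda_s)$. This matters because $\widetilde M$ contains the integral of $c(v,\lambda_v)\lambda_v\pd{V}{\lambda}\,\ud Z_v^\Lambda$, whose integrand is proportional to $\pd{\phi}{\lambda}(v,\lambda_v)$; while $\phi$ itself is bounded between $1$ and $e^{\alpha K}$ by its Feynman--Kac representation, no a priori bound on $\pd{\phi}{\lambda}$ is available, so stopping at your $\tau_n$ does not by itself make the stopped $Z^\Lambda$-integral a true martingale. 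You need to enlarge $\tau_n$ as the paper does (and then use \eqref{eq:exp_coeff} and \eqref{cond:sup} to control the resulting bounds), after which your argument goes through.
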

\noindent The proof can be found in Appendix \ref{app:tech}.
\begin{remark}
We observe that the optimal investment strategy $\Pi^*(t,X_t)$ turns out to be the same as the pure investment problem. 
This means that the optimal portfolio for the investment problem with the insurance derivative is equal to the strategy without insurance risks when the insurance payment is independent of the risky asset price process. This statement is the same as that given in \citet{Delong2009} and \citet{LiangLu2017} when the risky asset price dynamics is driven by a Lévy process and a shot-noise process, respectively.
\end{remark}

\section{Characterization of the indifference price}\label{sec:indiff-price}

In this section we compute explicitly the indifference price $P$, see Definition \ref{indiff-price}, for the pure endowment contract whose payoff is given in \eqref{payoffPE}.
Recall that the indifference price is the value $P$ at which the insurer is indifferent, in terms of expected utility, between not selling the policy and selling it for $P$
now while committing to pay the benefits at maturity. \\
\noindent We have the following explicit characterization of the indifference price process in our framework.
\begin{proposition} 
\label{thprezzo}
Under the same hypotheses of Theorem \ref{thver2}, for every $t \in [0,T]$, the indifference price of the insurer related to the pure endowment with maturity $T$ is given by
\begin{equation}
\label{prezzo}
P_{t}=P(t,\lambda;T) = \frac{\ln \big( \phi(t,\lambda) \big) }{\alpha e^{r(T-t)}},
\end{equation} for all $(t,\lambda) \in [0,T] \times \Rp $, where the function $\phi$ solves the Cauchy problem \eqref{PCphi}.
\begin{proof}
By Theorem \ref{thver} and Theorem \ref{thver2}, equation \eqref{przz} reads as
\begin{equation*}
	e^{-w\alpha e^{r(T-t)}} \varphi(t,i) = e^{-(w+P_{t}) \alpha e^{r(T-t)}}  \varphi(t,i) \phi(t,\lambda),
\end{equation*} 
and then \begin{equation*} e^{P_{t} \alpha e^{r(T-t)}}= \phi(t,\lambda),	\end{equation*} 
from which, computing the logarithm of both members, we get \eqref{prezzo}.
\end{proof}
\end{proposition}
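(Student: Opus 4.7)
The plan is to exploit the explicit ansatz-based formulas for the two value functions obtained in Theorems \ref{thver} and \ref{thver2}, and simply solve the defining equation \eqref{przz} algebraically for $P_t$. Since indifference pricing is a purely implicit relation here, once $\bar V$ and $V$ are both available in closed form up to the auxiliary functions $\varphi$ and $\phi$, no further stochastic-control machinery should be needed.

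Concretely, I would start by substituting
\begin{equation}
\bar V(t,w,i) = -e^{-w\alpha e^{r(T-t)}}\varphi(t,i), \qquad V(t,w+P_t,\lambda,i) = -e^{-(w+P_t)\alpha e^{r(T-t)}}\varphi(t,i)\phi(t,\lambda)
\end{equation}
into \eqref{przz}. The common factor $-\varphi(t,i)$ then cancels, leaving
\begin{equation}
e^{-w\alpha e^{r(T-t)}} = e^{-(w+P_t)\alpha e^{r(T-t)}}\phi(t,\lambda),
\end{equation}
which after dividing through by $e^{-w\alpha e^{r(T-t)}}$ reduces to $e^{P_t\alpha e^{r(T-t)}} = \phi(t,\lambda)$. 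Taking logarithms and solving for $P_t$ yields precisely the stated formula \eqref{prezzo}, and shows that $P_t$ depends on the state only through $\lambda$ (and not through $w$ or the regime $i$), which is already a meaningful structural observation worth recording.

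The only subtlety I would flag is a brief justification that the cancellations and the logarithm are legitimate: one needs $\varphi(t,i) \neq 0$ to divide it out, and $\phi(t,\lambda) > 0$ to take $\ln\phi(t,\lambda)$. Positivity of $\phi$ is already built into the ansatz \eqref{V} (which postulates $\phi$ non-negative) and is consistent with the terminal condition $\phi(T,\lambda)=e^{\alpha K}>0$ in \eqref{PCphi}; positivity of $\varphi$ likewise follows from its terminal value $\varphi(T,i)=1$ and the linear structure of \eqref{PCvarphi}. Apart from checking this, there is no genuine obstacle: the result is essentially an algebraic consequence of the two verification theorems, and the "work" has already been done in proving Theorem \ref{thver} and Theorem \ref{thver2}.
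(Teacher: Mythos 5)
Your proposal is correct and follows exactly the paper's own argument: substitute the closed forms of $\bar V$ and $V$ from the two verification theorems into the indifference equation, cancel the common factor $-e^{-w\alpha e^{r(T-t)}}\varphi(t,i)$, and take logarithms. The extra remark on the positivity of $\varphi$ and $\phi$ is a sensible (if minor) addition that the paper leaves implicit.
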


\begin{remark}
	In the actuarial literature, 
	exponential utility preferences 
	imply that the indifference price of a pure endowment depends on the risk-aversion coefficient, the stochastic interest rate and the logarithm of the function that links the two value functions and is independent of wealth (see e.g. \citet{YOUNGZAR, Young, YOUNGMOORE, LudkowskiYoung2008}). Here, the indifference price shares the same features. Moreover, we note that it does not explicitly depend on the current regime; 
meaning that in our model, the state of the economy does not affect directly the indifference price of such type of insurance contracts but only the amount invested in the financial market. 
\end{remark}

\noindent Under the indifference pricing principle, the premium satisfies a suitable backward PDE.
\begin{corollary}
For every $(t,\lambda) \in [0,T] \times \Rp$ the indifference premium $P=P(t,\lambda;T)$ satisfies the following PDE \begin{equation}
\begin{split}
	rP & = \pd{P}{t}+b(t,\lambda)\lambda \pd{P}{\lambda} + \frac{1}{2}c^{2}(t,\lambda)\lambda^2 \Big{\{}  \pds{P}{\lambda} + \alpha e^{r(T-t)}\left( \pd{P}{\lambda}\right)^2 \Big{\}} \\
    & \qquad + \frac{\lambda}{\alpha e^{r(T-t)}} \bigg(e^{-P \alpha e{r(T-t)}} -1 \bigg),
    \end{split}
\end{equation} 
with boundary condition $P(T,\lambda;T)=K$, for each $\lambda \in \Rp$.
\begin{proof}
It follows from a straightforward application of  \eqref{PCphi} and \eqref{prezzo}.	
\end{proof}
\end{corollary}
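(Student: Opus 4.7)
The strategy is to substitute the explicit relation for the indifference price from Proposition \ref{thprezzo} into the linear Cauchy problem \eqref{PCphi} and reduce it to a nonlinear PDE for $P$. Since the transformation is $P(t,\lambda;T) = \ln\bigl(\phi(t,\lambda)\bigr)/(\alpha e^{r(T-t)})$, i.e. $\phi(t,\lambda) = \exp\!\bigl(\alpha e^{r(T-t)} P(t,\lambda;T)\bigr)$, the argument is essentially a (Cole--Hopf-type) change of variable: the exponential change absorbs the $\phi$ on the right-hand side and generates the quadratic term $\alpha e^{r(T-t)} (\partial_\lambda P)^2$, while differentiating $e^{r(T-t)}$ in time produces the factor $-rP$.

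Concretely, the plan is as follows. First I would compute the partial derivatives of $\phi$ in terms of those of $P$: using the ansatz $\phi = \exp(\alpha e^{r(T-t)} P)$, I get
\begin{align}
\pd{\phi}{t} &= \phi\,\alpha e^{r(T-t)}\Bigl(\pd{P}{t} - r P\Bigr), \\
\pd{\phi}{\lambda} &= \phi\,\alpha e^{r(T-t)} \pd{P}{\lambda}, \\
\pds{\phi}{\lambda} &= \phi\Bigl(\alpha e^{r(T-t)}\pds{P}{\lambda} + \bigl(\alpha e^{r(T-t)}\bigr)^{2}\Bigl(\pd{P}{\lambda}\Bigr)^{2}\Bigr).
\end{align}

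Second, I would plug these into the PDE in \eqref{PCphi} and divide through by the positive factor $\phi\,\alpha e^{r(T-t)}$. The purely linear terms give $\pd{P}{t} - rP + b(t,\lambda)\lambda\,\pd{P}{\lambda} + \tfrac{1}{2}c^{2}(t,\lambda)\lambda^{2}\bigl(\pds{P}{\lambda} + \alpha e^{r(T-t)}(\pd{P}{\lambda})^{2}\bigr)$, while the mortality term $-\lambda(\phi-1)$ becomes
\begin{equation}
-\frac{\lambda(\phi-1)}{\phi\,\alpha e^{r(T-t)}} = \frac{\lambda}{\alpha e^{r(T-t)}}\Bigl(e^{-P\alpha e^{r(T-t)}}-1\Bigr),
\end{equation}
using $(\phi-1)/\phi = 1 - e^{-P\alpha e^{r(T-t)}}$. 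Rearranging so that $rP$ appears on the left produces exactly the stated equation.

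Finally, the terminal condition is obtained by evaluating \eqref{prezzo} at $t=T$: since $\phi(T,\lambda)=e^{\alpha K}$ by \eqref{PCphi}, we have $P(T,\lambda;T) = \ln(e^{\alpha K})/\alpha = K$ for every $\lambda \in \Rp$. There is no genuine obstacle in this proof; the only delicate step is the sign bookkeeping in $\partial_t \phi$ (where the derivative of $e^{r(T-t)}$ gives the $-rP$ term that ends up producing the $rP$ on the left-hand side), and the observation that the quadratic gradient term $(\partial_\lambda P)^2$ arises solely from the second-order derivative of the exponential ansatz.
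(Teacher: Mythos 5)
Your proposal is correct and is precisely the computation the paper leaves implicit when it says the corollary ``follows from a straightforward application of \eqref{PCphi} and \eqref{prezzo}'': substituting $\phi=\exp\bigl(\alpha e^{r(T-t)}P\bigr)$ into the linear Cauchy problem for $\phi$ and dividing by $\phi\,\alpha e^{r(T-t)}$. All the derivative computations, the sign of the $rP$ term coming from $\partial_t e^{r(T-t)}$, the quadratic gradient term, the mortality term, and the terminal condition $P(T,\lambda;T)=K$ check out.
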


\noindent Finally, we provide a probabilistic representation for the indifference price process $P$. 
Indeed, if the function $\phi$ solves the Cauchy problem \eqref{PCphi}, we can represent $\phi$ as an expectation via an extension of the Feynman-Kac formula. More precisely, using the linear PDE for $\phi -1$, it is easy to see that 
\begin{equation}
\phi(t,\lambda)-1 = \mathbb{E}_{t,\lambda} \big[ e^{-\int_{t}^{T}\lambda_{v} \ud v} \big(e^{\alpha K} - 1 \big) \big],
\end{equation} for every $(t,\lambda) \in [0,T] \times \Rp$. So, as a consequence, we have that 
\begin{equation}
\phi(t,\lambda)= 1 + \big( e^{\alpha K} - 1 \big)\mathbb{E}_{t,\lambda} \big[ e^{-\int_{t}^{T}\lambda_{v} \ud v} \big],
\end{equation} where $ \mathbb{E} _{t,\lambda} $ denotes the conditional expectation given $\lambda_{t}=\lambda$, for every $(t,\lambda) \in [0,T] \times \Rp$. We outline that $\mathbb{E}_{t,\lambda} \Big[ e^{-\int_{t}^{T}\lambda_{v} \ud v} \Big]$ is the conditional probability that an individual will survive until time $T$ given that she/he is alive at time $t$. Hence, representing the function $\phi$ as 
\begin{equation}
\label{fkphi}
\phi(t,\lambda)= e^{\alpha K} \mathbb{E}_{t,\lambda} \big[ e^{-\int_{t}^{T}\lambda_{v} \ud v} \big] + \Big(1 - \mathbb{E}_{t,\lambda} \big[ e^{-\int_{t}^{T}\lambda_{v} \ud v} \big] \Big) = \mathbb{E}_{t,\lambda} \big[ e^{\alpha G_{T}} \big] , \end{equation}
for every $(t,\lambda) \in [0,T] \times \Rp$, the indifference price of the insurer related to a pure endowment contract can be written as
\begin{equation}
	P_{t}=P(t,\lambda;T) = \frac{\ln \Big( \mathbb{E}_{t,\lambda} \big{[} e^{\alpha G_{T}} \big{]} \Big) }{\alpha e^{r(T-t)}},
\end{equation} for every $(t,\lambda) \in [0,T] \times \Rp$.

\section{Some numerical examples}\label{sec:numerical}

In this section, we provide numerical results to illustrate the qualitative behavior of the proposed model, highlighting features that are challenging to analyze solely through theoretical methods. Specifically, we aim to examine the impact of regime-switching dynamics and the stochastic hazard rate on the insurer's optimal strategies and pricing decisions. These simulations offer valuable insights into the interplay between economic regimes and mortality risks, demonstrating how these factors influence the insurer's risk management and indifference pricing outcomes.

To simplify the analysis, we suppose that the Markov chain $X$ has two states, namely $\mathcal{X}=\{1,2\}$, that can be interpreted as the 'Good' and 'Bad' economic regimes, respectively. For instance, the good regime could represent a market in economic boom whereas the bad regime could be a market in economic recession in which security prices are expected to fall. We also call these two regimes of the market 'bull' market and 'bear' market, respectively.

First of all, we have to set values for the infinitesimal generator of the $2$-state Markov chain. Since $a_{ij}$ represents the average of number of switches in an unit time, from state $i$ to $j$, and since empirical observations of the market suggest that it is more likely to pass from a good economic state to a bad one than the opposite, we choose $a_{12}>a_{21}$. In particular, we take $a_{12}=0.2$ and $a_{21}=0.1$.

\noindent For the sake of simplicity, we assume that functions $\mu$, $\sigma$, $K_1$ and $K_2$ depend only on the Markov chain. So, by \eqref{s}, the risky asset price dynamics is given by
\begin{equation*}
	\ud S_{t}=S_{t-}\left(\mu_{i}\ud t + \sigma_{i} \ud Z^{S}_{t} + K_{1,i} \ud N_t^1 - K_{2,i} \ud N_t^2\right), \quad  S_{0}>0, \quad i=1,2,
\end{equation*} 
where $\mu_{i}$, $\sigma_{i}$, $K_{1,i}$ and $K_{2,i}$ denote the expected rate of return, the volatility and the jump coefficients in the $i$-th regime.
By way of example, we set the initial value of the stock price to be $S_{0}=1$ and the short-term interest rate to be $r=5\%$.  
As shown by \citet{FRENCH}, the appreciation rate of the underlying risky asset 
is higher in a growing economy, so we assume that $\mu_{1}>\mu_{2}$. Moreover, in each economic regime, the return of the risky asset should be higher than that of the risk-free rate, as required also in our modeling framework. \citet{HAMLIN} find that economic recessions represent the main factor that drives fluctuations in the volatility of stock returns, so we assume that volatility is lower in a good economy, i.e. $\sigma_{1}<\sigma_{2}$. Furthermore, let us assume that $\ds \frac{\mu_{1}-r}{\sigma_{1}^{2}}>\frac{\mu_{2}-r}{\sigma_{2}^{2}}$. In fact, according to \citet{FRENCH}, even though the expected market risk premium (defined as the expected return on the stock minus the risk-free interest rate) is usually higher during a 'bear' market than during a 'bull' market, the volatility of the stock offsets the effect of this quantity and, as a consequence, the ratio 'expected excess return/return variance' is greater when the economic conditions are good. 
As for the jump terms, we consider two homogeneous Poisson processes $N^1$ and $N^2$ with constant intensities $\Theta_1=0.3$ and $\Theta_2=0.4$. 
We observe that the higher are the values of function $K_1$, the higher is the price of the stock $S$. On the other hand, any increase in the coefficient $K_2$ leads to smaller prices for the risky stock. Moreover, we notice that large values of $K_2$ cause dizzying upward or downward peaks for the stock price, even though the intensity $\Theta_2$ is tiny. Therefore, since in a market with good economic conditions stock prices are rising or are expected to rise, we suppose that $K_{1,1}>K_{1,2}$ and $K_{2,1}<K_{2,2}$.  
In particular, we choose the parameters as in Table \ref{tab}.
    \begin{table}[H]
	\caption{Simulation market parameters.}
	\label{tab}
    \centering
	\begin{tabular}{ccccc}
		\toprule
		Regime  & $\mu$ & $\sigma$ & $K_1$ & $K_2$ \\
		\midrule
		'Good'  & $0.15$ & $0.15$ & $0.15$ & $0.3$ \\
		'Bad'  & $0.12$ & $0.25$ & $0.1$ & $0.35$ \\
		\bottomrule
\end{tabular}\end{table} 

\noindent Since the underlying market is a continuous-time model, we need to discretize it by Monte Carlo simulation. 
The time horizon is taken to be $T=10$ years and we discretize time with a total of $1000$ time steps (that means that we take into account about two updates of $S$ every workweek), each of width $\Delta t=\frac{1}{100}$. 
To get an idea of our model, we simulate three trajectories of the risky asset $S$ in Figure \ref{S_MC}. We notice that the stock price is greater during a 'bull market' rather than during a 'bear' market and it also exhibits jumps at switching times of the Markov chain.
\begin{figure}[h!]
	\centering
	\includegraphics[width=7.5cm]{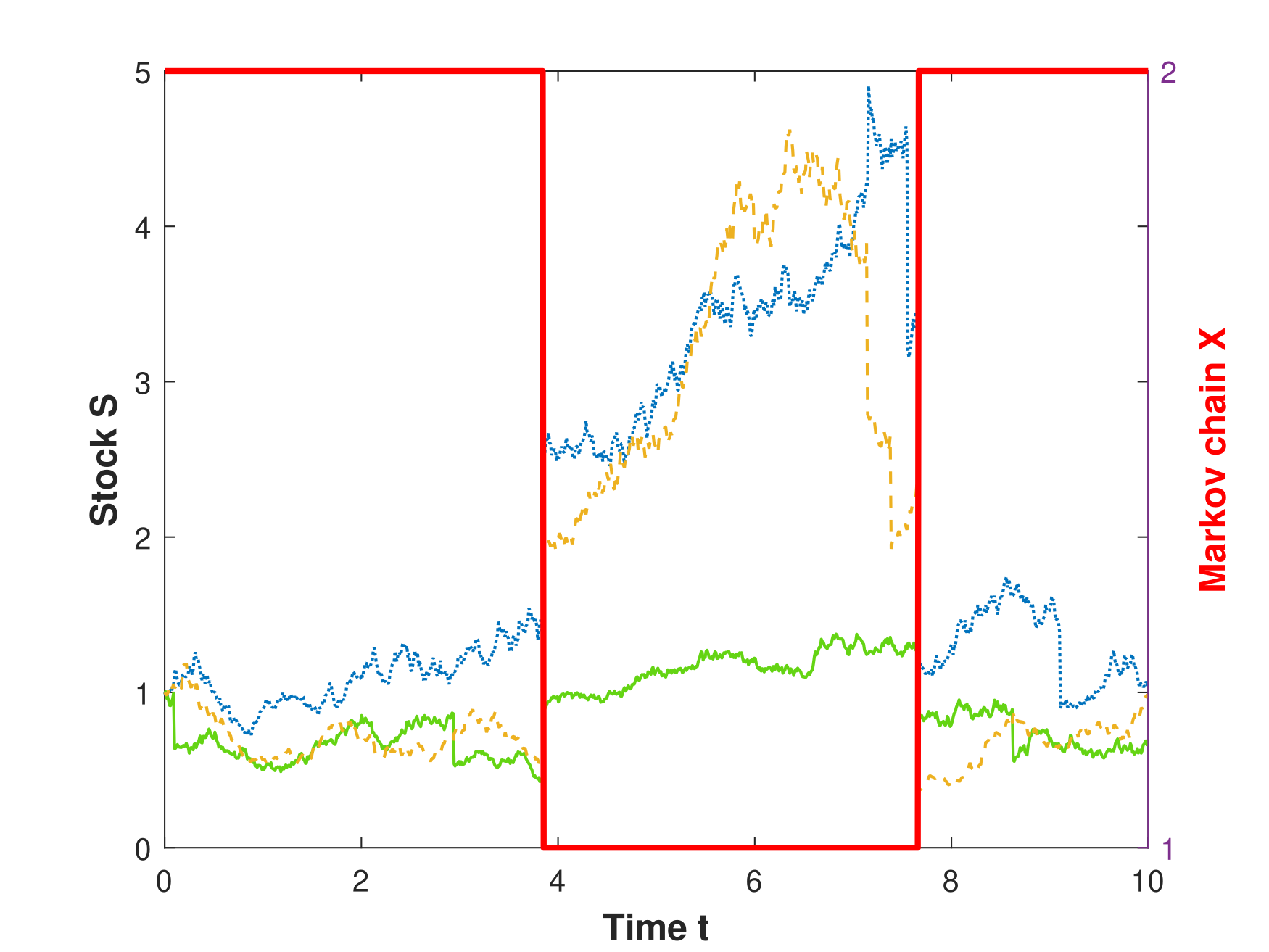}
	\caption{The effect of the regime-switching on the stock price $S$.}
	\label{S_MC}	
\end{figure}

\noindent We set the risk aversion coefficient to $\alpha= 0.5$.
Next, we compute the optimal investment strategy based on Proposition \ref{opwi}, in order to investigate how it is sensitive to economic regimes. In Figure \ref{optstr(t)_MC} we plot the optimal dynamic portfolio given by \eqref{eqh}, as a function of time. 
\begin{figure}[h!]
	\centering
	\includegraphics[width=7.5cm]{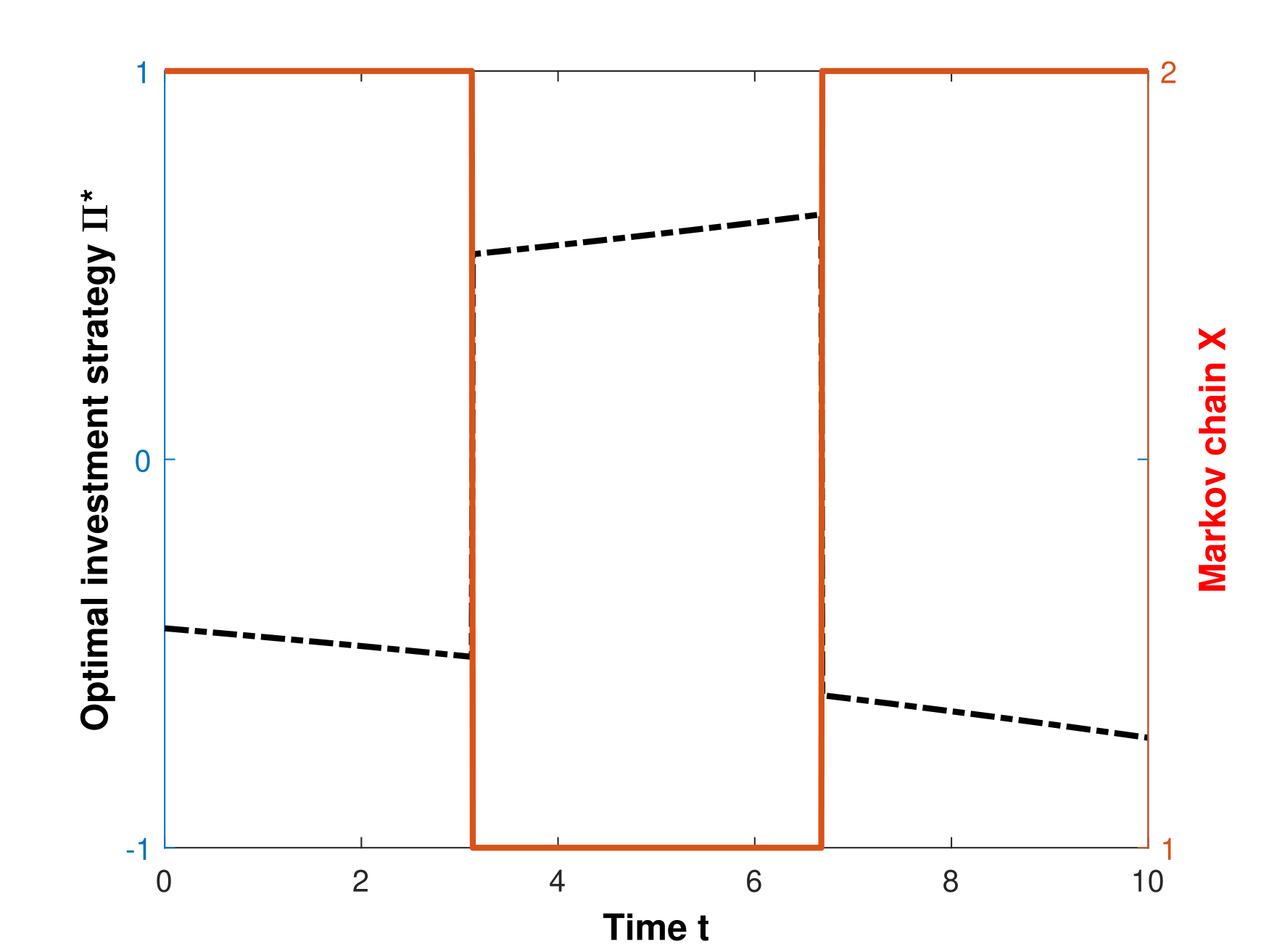}
	\caption{The effect of regime-switching on the optimal strategy $\Pi^*$.}
	\label{optstr(t)_MC}	
\end{figure} 
\noindent We clearly notice that a regime switch leads to a sudden change in the optimal strategy.
Moreover, we note that in a good economy the amount invested in the stock is always positive and increasing with respect to time; instead, if the market scenario is bad, the strategy is negative; this indicates that when the economic conditions are bad, the insurer prefers to short-sell the risky asset. 
Next, we consider a life insurance policy and examine its indifference price.

\noindent In the current toy example of our proposed model, we assume that the hazard rate follows a mean-reverting Brownian Gompertz model, similar to the one proposed in \citet{MILPRO}, i.e. 
\begin{align} 
	\lambda_{t} & = \lambda_{0} e^{c_1t+c_2 Y_{t}}, \quad c_1, c_2, \lambda_{0}>0, \\ \ud Y_{t} & =-mY_{t}\ud t+\ud Z_{t}^{Y}, \ \ Y_{0}=0, \quad  m\geq 0,
\end{align} 
with $c_1=0.083$, $c_2=0.1$, $\lambda_{0}=0.01$ and $m=0.5$. Let us observe that this choice corresponds to \eqref{y}, considering $b(t,\lambda)=c_1+m \ln(\lambda_{0}) +\frac{1}{2}c_2^2-m \ln(\lambda) + mc_1t$ and $c(t,\lambda)=c_2\lambda$, for all $(t,y)$. This model guarantees that the hazard rate is kept positive and does not explode on $[0,T]$, as it is an exponential function that depends on a stochastic factor $Y$ with a mean reversion behavior.

In this context, based on the results obtained above, we compute the indifference price of an insurer related to a pure endowment contract that pays $K=1$ if the policyholder is still alive after $10$ years of acquiring the policy. 
Thus, the payoff is easily given by the random variable
$$
G_T= \I_{\{\tau > T \}},
$$ 
recalling that $\tau$ represents the remaining lifetime of the insured.
Now, we aim to analyze the value function $\bar{V}$ associated with the insurer who solely invests her/his wealth in the market, and the value function $V$ corresponding to the insurer who also underwrites a pure endowment contract.
\begin{figure}[h!]
	\centering
	\includegraphics[width=10cm]{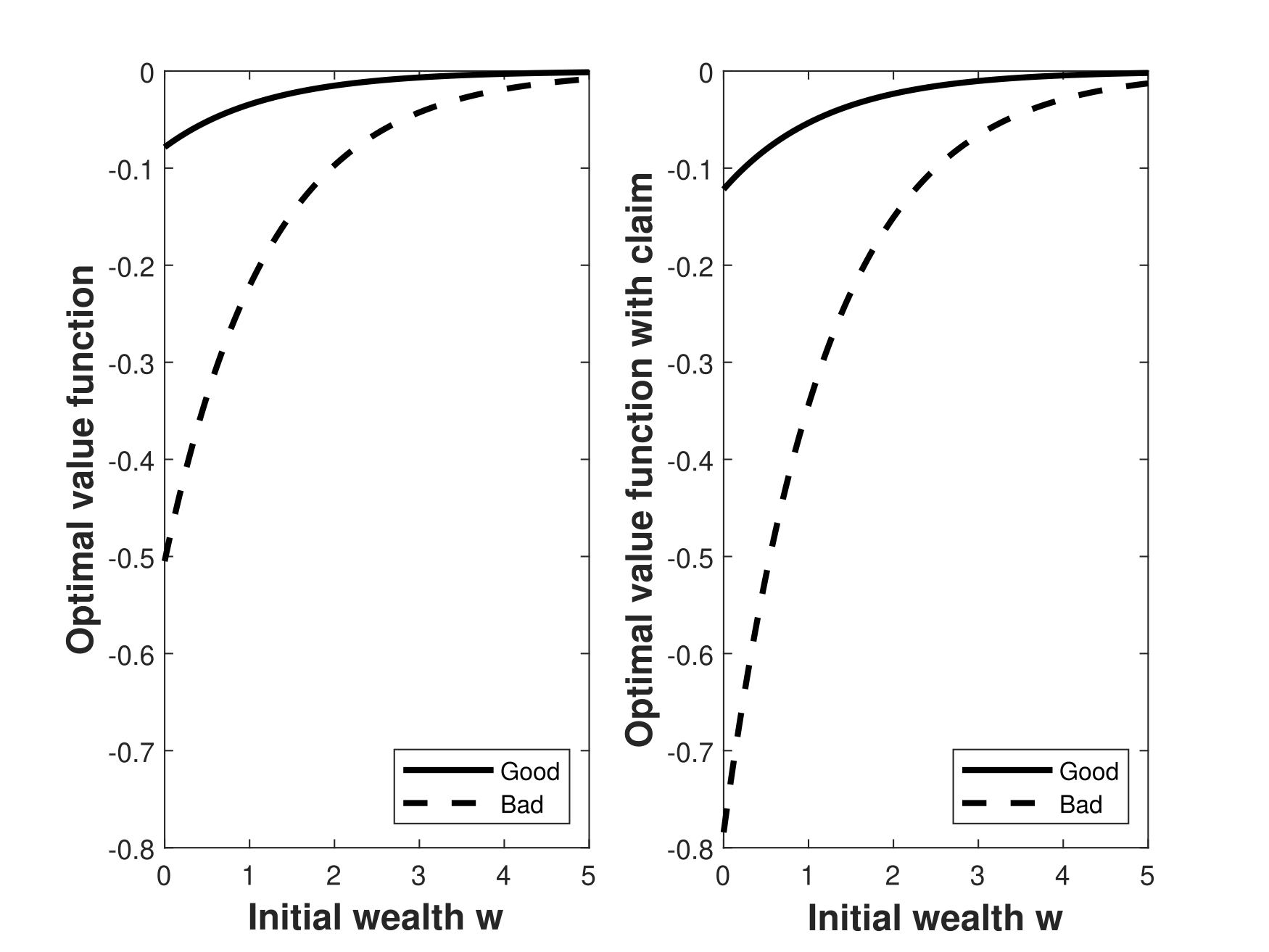}
	\caption{Optimal value at time $0$ as a function of wealth when the economic regime is $i=1$ (solid line) or $i=2$ (dashed line). Left panel: the pure investment problem. Right panel: the investment problem with the insurance contract.}
	\label{V(w)}	
\end{figure}
\noindent Figure \ref{V(w)} depicts the value functions $\bar{V}$ (left panel) and $V$ (right panel) at time $t=0$, with respect to the initial wealth $w$, associated to the optimal strategy computed above, when the market state is good (solid line) or bad (dashed line). The two panels exhibit the same behavior: the optimal value functions are increasing functions of wealth, in both regimes. It is worth noting that values reached by functions $\bar{V}$ and $V$ are always higher in a 'bull' market, as it is reasonable. Furthermore, we can also point out that different economic conditions imply different value functions and that this gap becomes greater when the insurer, beyond investing in the financial market, also writes an insurance contract. 

\noindent Next, we investigate the indifference price of a pure endowment policy, in order to highlight the dependence of a life insurance contract on mortality force and time to expiration.
In view of the probabilistic representation provided in Section $5$, the indifference price charged by the insurer is determined as \begin{equation}\label{PREZZOESNUM}
	P_{t}=P(t,\lambda;T) = \frac{\ln \Big( 1 + (e^{\alpha}-1)\mathbb{E}_{t,\lambda} \big[ e^{-\int_{t}^{T}\lambda_{v} \ud v} \big] \Big) }{\alpha e^{r(T-t)}},
\end{equation} for every $(t,\lambda) \in [0,T] \times \Rp$.
We employ this formula, using the standard Monte Carlo method (with parameter $M=5000$) to evaluate expectations with respect to the probability measure $\P$.
From expression \eqref{PREZZOESNUM}, we point out that economic regimes do not affect the price which instead strongly depends on the risk aversion coefficient and the risk-free interest rate. In particular, it is easy to see that the indifference price increases as risk aversion increases and, at the same time, it decreases as long as the interest rate increases. Furthermore, since the dependence on the mortality rate $\lambda$ is not explicit, we would like to numerically analyze how the hazard rate affects the price of the life insurance policy involved. 
First of all, we show the impact of changing initial mortality rate on the indifference price charged at the beginning of the time interval.

In Figure \ref{P(lambda0)} we observe that the larger force of mortality decreases the indifference price $P_0$ charged by the insurer at time $t=0$, as is reasonable to expect for such types of insurance contracts. This is consistent with common intuition
as, under higher mortality, an endowment payout is less likely.
\begin{figure}[h!]
	\centering
	\includegraphics[width=7.5cm]{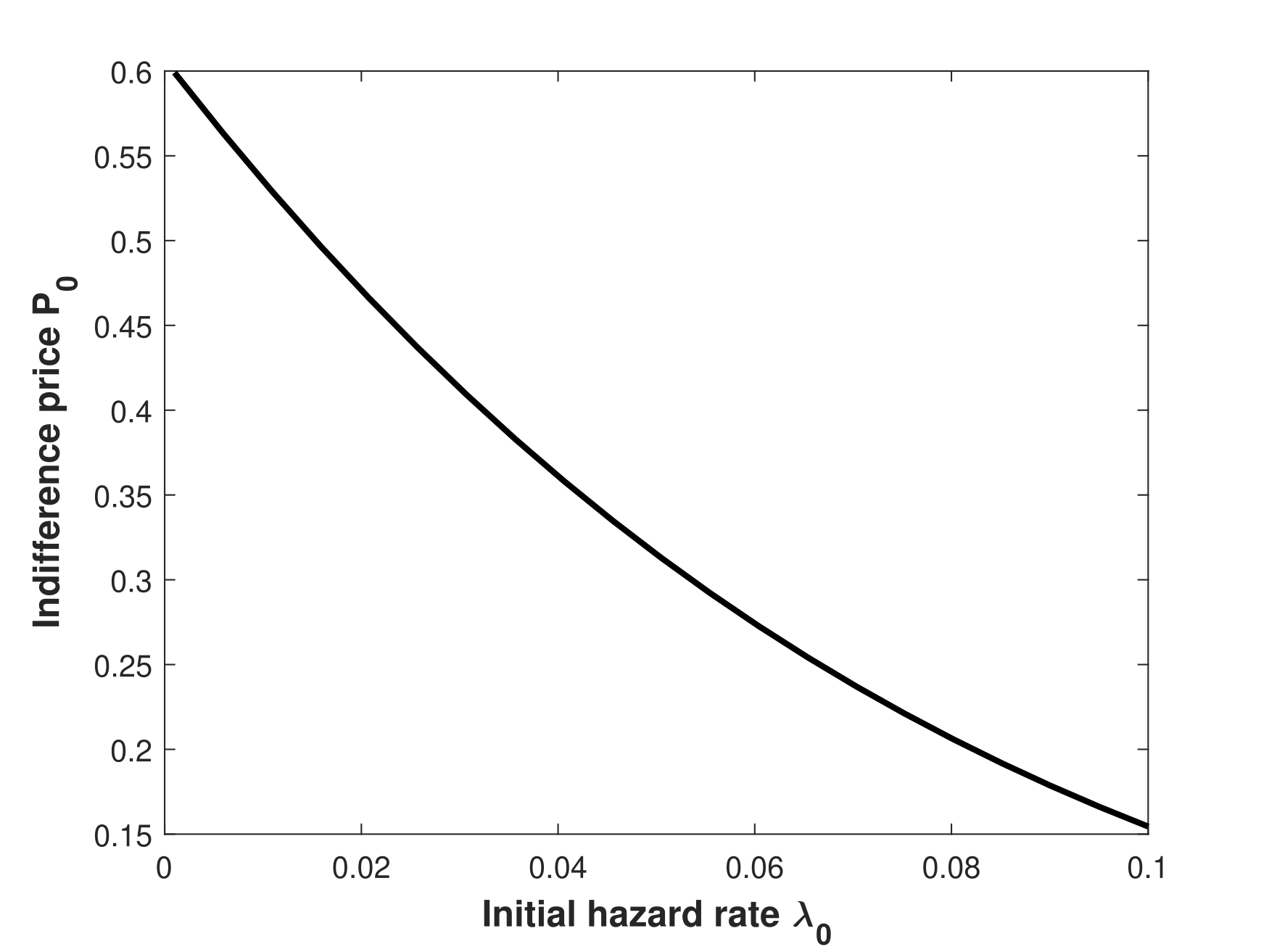}
	\caption{The effect of the hazard rate on the indifference price at time $t=0$.}
	\label{P(lambda0)}	
\end{figure}

\noindent Finally, we investigate the evolution of indifference price over time. For simplicity, we assume constant mortality (such as in some numerical experiments of \citet{YOUNGMOORE}). In this framework, we calculate the indifference premium related to a pure endowment policy for the insurer and we plot it as a function of time to maturity. 
\begin{figure}[t]
	\centering
\includegraphics[width=7.5cm]{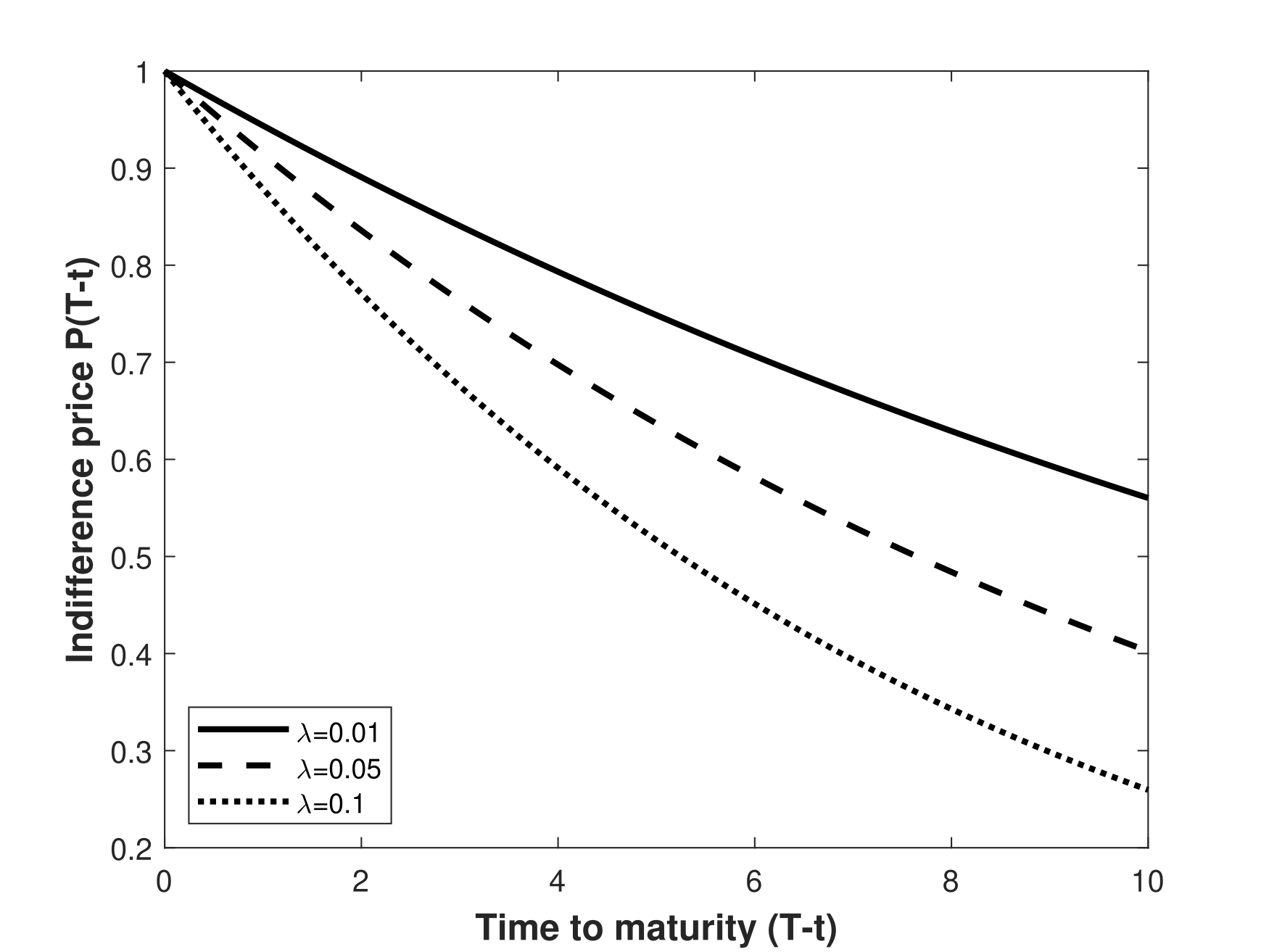}
	\caption{The effect of the hazard rate on the indifference price for several different deferral periods.}
	\label{P(T-t)}
\end{figure}
As before, in Figure \ref{P(T-t)}, we can see that the higher the hazard rate, the lower the indifference price for a pure endowment policy, whether the economic conditions are good or not; in other terms, the price is more sensitive to variations of deferral periods, when the intensity of population mortality is more pronounced. Moreover, it is worth noting that the indifference price is a decreasing function of time of maturity, i.e. the premium is bigger. 
as time approaches maturity, as usually happens.

\section{Concluding remarks}\label{sec:conclusion}

In this paper, we have analyzed the indifference pricing of pure endowments within a regime-switching model for an insurer with exponential utility preferences. The financial market model considered consists of a riskless asset and a stock whose price evolves according to a jump diffusion process modulated by a continuous-time finite-state Markov chain, representing the states of the economy. The model accounts for both smooth price movements and abrupt changes due to positive and negative jumps, which reflect favorable events (e.g., market expansions) and adverse shocks (e.g., regulatory changes), respectively. In addition, we have incorporated a stochastic hazard rate to describe population mortality.\\
Within this framework, using the actuarial principle of equivalent utility, we have characterized the indifference price for a pure endowment contract and derived its probabilistic representation. Furthermore, we have demonstrated that the indifference price solves a final value problem. Specifically, the premium that makes the insurer indifferent—based on expected utility between not selling the policy and selling it for that premium while paying the benefits at maturity is linked to a classical solution of a linear PDE with a proper terminal condition. This was achieved by solving an equation involving two value functions corresponding to the stochastic control problems with and without insurance liabilities. Applying the classical control approach based on the HJB equation, we have identified the optimal investment strategies and verified the value functions for both problems via solutions to a linear PDE and a system of ODEs.\\
A sensitivity analysis, conducted under a two-state Markov chain, has highlighted several key features of the indifference price. We have investigated the influence of the hazard rate and the time to maturity on the price. Our findings suggest that when the mortality intensity is low, the likelihood of an endowment payout increases, resulting in a higher premium. Furthermore, we have observed that the indifference price decreases for longer deferral periods, as expected.\\
Another noteworthy numerical result shows that the insurer chooses to short-sell the risky asset when the financial market is in the bad state (characterized by a low rate of return, high volatility, and frequent jumps in the stock price).\\
Building on the methodology developed here, future work could consider more complex life‐insurance products—such as equity-linked policies—and extend the mortality model by incorporating multiple hazard rates to account for heterogeneity (for example, age cohorts). Moreover, allowing the risk‐aversion parameter $\alpha$ to depend on the regime would capture "flight‐to‐safety" behavior more faithfully, namely the tendency of investors to become more risk-averse in turbulent markets.
However, this choice would result in a nonlinear HJB equation, which no longer permits the separation of the wealth variable and leads, instead, to a nonlinear fully coupled system of PDEs for which a classical solution is not guaranteed in general.
We leave a full treatment of these extensions to forthcoming research.


\medskip

\section*{Acknowledgements}

The first author is member of the Gruppo Nazionale  per l'Analisi Matematica, la Probabilità e le loro Applicazioni (GNAMPA) of the Istituto Nazionale di Alta Matematica (INdAM).





\bibliographystyle{plainnat}
\bibliography{biblio}

\appendix

\section{Technical proofs} \label{app:tech}


First, we provide the proof of the Verification Theorem for the pure investment problem.
\begin{proof}[Proof of Theorem \ref{thver} ]
The proof uses similar arguments as in those used to show Theorem \ref{thver2} below for the problem with the insurance derivative. Note that
Problem \ref{prbesp} corresponds to a special case of Problem \ref{prbcesp}, choosing $G_T = 0$. Nevertheless, for the sake of clarity we trace the fundamental steps of the proof.\\
By Proposition \ref{prop:value}, the function $\bar V(t, w, i)$ defined in equation \eqref{eq:valuefunction1} solves the HJB problem \eqref{VbarHJB}. Hence, for any $(t,w,i) \in [0,T] \times \R^+ \times \mathcal X$, we have
\begin{equation}\label{eqsolbar}
\bar {\mathcal L}_i^\Pi \bar{V}(s,W_{s}^\Pi(t,w),X_{s}(t,i)) \leq 0, \quad \forall s \in [t,T],\ \Pi \in \mathcal A_t, 
\end{equation}
where we recall that $\{W_{s}^{\Pi}(t,w),\ s \in [t,T]\}$ and $\{X_{s}(t,i),\ s \in [t,T]\}$ denote the solutions to equations \eqref{wealth} and \eqref{X} at time $s \in [t,T]$, starting from $(t,w) \in [0,T] \times \R^+$ and $(t,i) \in [0,T] \times \mathcal X$, respectively. Clearly, $\bar V(\cdot,\cdot,i) \in C^{1,2}([0,T]\times \R)$, for each $i \in \mathcal X$. 

In view of \eqref{wealth}, by applying It\^{o}'s formula, we have 
\begin{equation}\label{becb}
 \bar{V}(T,W_{T}^\Pi(t,w),X_{T}(t,i)) =  \bar{V}(t,\lambda,i) + \int_{t}^{T}\!\! \bar{\mathcal{L}}_i^{\Pi}\bar{V}(v,W_{v}^\Pi(t,w),X_{v\!}(t,i))\ud v + M_T,
\end{equation} 
where $M=\{M_r,\ r \in [t,T]\}$ is the stochastic process given by
\begin{equation}
\label{Mbar}
\begin{split}
M_r=& \int_{t}^{r}\Pi_{v}\sigma(v,X_{v})\pd{\bar{V}}{w}(v,W_{v}^\Pi,X_{v})  \ud Z_{v}^S \\ &  + \int_{t}^{r}\! \int_{\R} {{ \big{\{} \bar{V}\big{(}v,W_{v}^\Pi,X_{v-\!}+h(X_{v-\!},z) \big{)} - \bar{V}(v,W_{v}^\Pi,X_{v-\!}) \big{\}} \widehat{\mathcal{P}}(\ud v,\ud z)}} \\ & + \int_{t}^{r}\! {{ \big{\{} \bar{V}\big{(}v,W_{v-}^\Pi+ \Pi_v K_1(v,X_{v-}),X_{v-}) \big{)} - \bar{V}(v,W_{v-}^\Pi,X_{v-}) \big{\}} \{ \ud N^1_v - \Theta_1(v) \ud v \} }}  \\ &  + \int_{t}^{r}\! {{ \big{\{} \bar{V}\big{(}v,W_{v-}^\Pi- \Pi_v K_2(v,X_{v-}),X_{v-}) \big{)} - \bar{V}(v,W_{v-}^\Pi,X_{v-}) \big{\}} \{ \ud N^2_v - \Theta_2(v) \ud v \} }}.
\end{split}\end{equation}
In order to prove that $M$ is a $(\mathbb{G},\P)$-local martingale, 
we use a localization argument, 
taking
\begin{equation}
\tau_{n}:= \inf \{s \in [t,T] \ | \ W_s^\Pi < -n \}, \quad n\in \N,
\end{equation}
which defines a non-decreasing sequence of stopping times $\{ \tau_n \}_{n \in \N}$ such that $\lim_{n\longrightarrow +\infty} \tau_n= +\infty$. 

Therefore, taking the conditional expectation with respect to $W_{t}=w$ and $X_{t}=i$ on both sides of \eqref{becb}, with $T$ replaced by $T \wedge \tau_n$, by \eqref{eqsolbar} we obtain that
\begin{equation}
\mathbb{E}_{t,w,i} \big{[} \bar{V}(T\wedge \tau_n,W_{T\wedge \tau_n}^\Pi(t,w),X_{T\wedge \tau_n}(t,i)) \big{]} 
\leq \bar{V}(t,w,i),	
\end{equation}
for every $\Pi \in \mathcal{A}_t$, $t \in \llbracket 0,T\wedge \tau_n\rrbracket$, $n \in \N$. Now, we note that 
\begin{equation}\begin{split}
&\mathbb{E} \big{[} \big(\bar{V}(T\wedge \tau_n,W_{T\wedge \tau_n}^\Pi(t,w),X_{T\wedge \tau_n}(t,i)) \big) ^2 \big{]} = \mathbb{E} \big{[} e^{-2\alpha W_{T\wedge \tau_n}^\Pi e^{r(T\wedge \tau_n -t)}} \varphi(T\wedge \tau_n,i)^2 \big{]}  < \infty. 
\end{split}\end{equation} 
Consequently, $ \{\bar{V}(T\wedge \tau_n,W_{T\wedge \tau_n}^\Pi(t,w),X_{T\wedge \tau_n}(t,i)) \}_{n\in \N}$ is a family of uniformly integrable random variables. Hence, it converges almost surely. Since $\{ \tau_n \}_{n\in \N}$ is a bounded and non-decreasing sequence of random times and the process $\{W_s^\Pi(t,w),\ s \in [t,T]\}$ is continuous in $T$, see \eqref{def:solwealth},
taking the limit for $n \longrightarrow + \infty$, 
we get
\begin{equation} 
\begin{split}
&\mathbb{E}_{t,w,i} \big{[} \bar{V}(T,W_{T}^\Pi(t,w),X_{T}(t,i)) \big{]} = \lim_{n \longrightarrow +\infty} \mathbb{E}_{t,w,i} \big{[} \bar{V}(T\wedge \tau_n,W_{T\wedge \tau_n}^\Pi(t,w),X_{T\wedge \tau_n}(t,i)) \big{]} \\ 
& \quad \quad \leq \bar{V}(t,w,i), \quad \forall t \in [0,T],\ \Pi \in \mathcal{A}_t.
\end{split}
\end{equation}

As a byproduct, since $\Pi^*(t,i)$ given in Proposition \ref{opwi} realizes the infimum in \eqref{min-prob}, we have that $\bar {\mathcal L}_i^{\Pi^*}\bar V(t,w,i)=0$ and, performing the computations above, we get the equality
\begin{equation}
\mathbb{E} _{t,w,i} \Big{[} -e^{-\alpha W_{T}^{\Pi^*}(t,w)} \Big{]}=\sup_{\Pi \in \mathcal A_t} \mathbb{E} _{t,w,i} \Big{[} -e^{-\alpha W_{T}^{\Pi}(t,w)} \Big{]}=\bar{V} (t,w,i),
\end{equation}
that is, $\Pi_t^*=\Pi_t^*(t,X_t)$ is an optimal control.
\end{proof}

\begin{proof}[Proof of Lemma \ref{wysx}]
In view of \eqref{s}, \eqref{y} and  \eqref{wealth}, 
by applying It\^o's formula to the stochastic process $f (t,W_{t}^\Pi,\lambda_{t},X_{t})$, we have
\begin{equation}
f (t,W_{t}^\Pi,\lambda_{t},X_{t}) =  f(0,W_{0}^\Pi,\lambda_{0},X_{0}) + \int_0^t \mathcal L^\Pi f (u,W_u^\Pi,\lambda_u,X_u) \ud u + m_t,
\end{equation}
where
\begin{equation}\label{m}
\begin{split}
& m_t \\
& =  m_0 + \int_{0}^{t} \Pi_{v}\sigma(v,X_{v})\pd{f}{w}(v,W_{v}^\Pi,\lambda_{v},X_{v})\ud Z_{v}^S  + \int_{0}^{t}c(v,\lambda_{v})\lambda_v\pd{f}{\lambda}(v,W_{v}^\Pi,\lambda_{v},X_{v}) \ud Z_{v}^{\Lambda} \\ 
& + \int_{0}^{t}\! \int_{\R} {{ \Big( f\big{(}v,W_{v}^\Pi,\lambda_{v},X_{v-\!}+h(X_{v-\!},z) \big{)} - f(v,W_{v}^\Pi,\lambda_{v},X_{v-\!}) \Big) \widehat{\mathcal{P}}(\ud v,\ud z)}} \\ &  + \int_{0}^{t}\! {{ \Big( f\big{(}v,W_{v-}^\Pi+ \Pi_v K_1(v,X_{v-}),\lambda_{v},X_{v-}) \big{)} - f(v,W_{v-}^\Pi,\lambda_{v},X_{v-}) \Big)( \ud N^1_v - \Theta_1(v) \ud v ) }}  \\ &  + \int_{0}^{t}\! {{ \Big( f\big{(}v,W_{v-}^\Pi- \Pi_v K_2(v,X_{v-}),\lambda_{v},X_{v-}) \big) - f(v,W_{v-}^\Pi,\lambda_{v},X_{v-}) \Big) ( \ud N^2_v - \Theta_2(v) \ud v ) }}.	
\end{split} 
\end{equation}
We only need to prove that the process $m=\{m_t,\ t \in [0,T]\}$ is a $(\mathbb G, \P)$-martingale. 
By \eqref{ip1mkv}, the first two integrals in \eqref{m} are well-defined and turn out to be $(\mathbb G,\P)$-martingales.  
Furthermore, due to \eqref{ip2mkv}, we have that also the jump terms in \eqref{m} are $(\mathbb{G},\P)$-martingales, (see e.g. \citet[Theorem $26.12(2)$]{MMO} and \citet[Lemma L$3$, Ch.II]{BREMAUD} for further details about the martingale property related to a Poisson random measure and a Poisson process, respectively). 	
\end{proof}
\noindent Now, we provide the proof of the Verification Theorem for the investment problem with the insurance derivative.
\begin{proof}[Proof of Theorem \ref{thver2}]
Let $\varphi :[0,T] \times \mathcal{X} \longrightarrow \R$ be a function 
such that $\varphi(\cdot,i) \in C^{1}\big{(} (0,T)\big{)} \cap  C\big{(} [0,T] \big{)}$, for each $i \in \mathcal{X}$, and suppose that it is a solution of the problem \eqref{PCvarphi}. Moreover, let $\phi :[0,T] \times \Rp \longrightarrow \R^+$ be a function such that $\phi(\cdot,\cdot) \in C^{1}\big{(} (0,T)\times \Rp \big{)} \cap  C\big{(} [0,T] \times \Rp \big{)}$, and suppose that it solves the problem \eqref{PCphi}. Now, taking $V$ defined in \eqref{V}, we have that
$V$ is a solution of the problem \eqref{VHJB}. This implies that, for every $(t,w,\lambda,i) \in [0,T]\times \R \times \Rp \times \mathcal{X}$
\begin{equation}\label{eq:vsol}
\begin{split}
&{\mathcal L}_i^\Pi V(s,W_s^\Pi(t,w),\lambda_s(t,\lambda),X_s(t,i)) \\
& + \lambda_s(t,\lambda) \left( \bar{V}(s,W_s^\Pi(t,w),X_s(t,i))- V(s,W_s^\Pi(t,w),\lambda_s(t,\lambda),X_s(t,i)) \right)\! \leq 0, \quad  s\! \in \![t,T],
\end{split}
\end{equation}
for all $\Pi \in \mathcal{A}_t$, where $\{\lambda_{s}(t,\lambda),\ s \in [t,T]\}$ denotes the solution to equation \eqref{y} with initial condition $\lambda_t= \lambda$ 
and $\bar V$ is the value function of the pure investment problem given in \eqref{Vbaruesp}.
In view of \eqref{wealth}, by applying It\^{o}'s formula, we have 
\begin{equation}\label{bec}
\begin{split}
& V(T,W_{T}^\Pi(t,w),\lambda_{T}(t,\lambda),X_{T}(t,i)\!) =  V(t,\lambda,i) + \!\!\int_{t}^{T}\!\!\!\! \mathcal{L}_i^{\Pi}V(v,W_{v}^\Pi(t,w),\lambda_{v}(t,\lambda),X_{v\!}(t,i))\ud v \\
& \  + \int_{t}^{T}\!\!\lambda_v(t,\lambda) \left( \bar{V}(v,W_v^\Pi(t,w),X_v(t,i))- V(v,W_v^\Pi(t,w),\lambda_v(t,\lambda),X_v(t,i)) \right) \ud v + M_T,
\end{split}
\end{equation} 
where $M=\{M_s,\ s \in [t,T]\}$ is the stochastic process given by
\begin{equation}
\label{M}
\begin{split}
&M_s= \int_{t}^{s}\Pi_{v}\sigma(v,X_{v})\pd{V}{w}(v,W_{v}^\Pi,\lambda_{v},X_{v})  \ud Z_{v}^S + \int_{t}^{s}c(v,Y_{v}) \lambda_v \pd{V}{\lambda}(v,W_{v}^\Pi,\lambda_{v},X_{v})  \ud Z_{v}^{\Lambda} \\ &  + \int_{t}^{s}\! \int_{\R} {{ \big{\{} V\big{(}v,W_{v}^\Pi,\lambda_{v},X_{v-\!}+h(X_{v-\!},z) \big{)} - V(v,W_{v}^\Pi,\lambda_{v},X_{v-\!}) \big{\}} \widehat{\mathcal{P}}(\ud v,\ud z)}} \\ &  + \int_{t}^{s}\! {{ \big{\{} V\big{(}v,W_{v-}^\Pi+ \Pi_v K_1(v,X_{v-}),\lambda_{v},X_{v-}) \big{)} - V(v,W_{v-}^\Pi,\lambda_{v},X_{v-}) \big{\}} \{ \ud N^1_v - \Theta_1(v) \ud v \} }}  \\ &  + \int_{t}^{s}\! {{ \big{\{} V\big{(}v,W_{v-}^\Pi- \Pi_v K_2(v,X_{v-}),\lambda_{v},X_{v-}) \big{)} - V(v,W_{v-}^\Pi,\lambda_{v},X_{v-}) \big{\}} \{ \ud N^2_v - \Theta_2(v) \ud v \} }}.
\end{split}\end{equation}
Now, we prove that $M$ is a $(\mathbb{G},\P)$-local martingale. Precisely, we need to show that 
\begin{equation*}
\mathbb{E} \bigg[ \int_{t}^{T\wedge \tau_n} \Big( \sigma(v,X_{v})\Pi_{v} \pd{V}{w}(v,W_{v}^\Pi,\lambda_{v},X_{v}) \Big)^{2} \ud v \bigg] < \infty,	\end{equation*}
\begin{equation*}
\mathbb{E} \bigg[ \int_{t}^{T\wedge \tau_n} \Big( c(v,\lambda_{v}) \lambda_v \pd{V}{\lambda}(v,W_{v}^\Pi,\lambda_{v},X_{v}) \Big)^{2} \ud v \bigg] < \infty,	\end{equation*}
for a suitable, non-decreasing sequence of stopping times $\{ \tau_n \}_{n \in \N}$ such that $\lim_{n\longrightarrow +\infty} \tau_n= +\infty$. 
Taking expression \eqref{V} into account, we note that 
\[\begin{array}{lll}
\pd{V}{w}(t,w,\lambda,i)&= \alpha \phi(t,\lambda) \varphi(t,i) e^{r(T-t)-\alpha w e^{r(T-t)}} 
, \\
\pd{V}{\lambda}(t,w,\lambda,i)&=-\pd{\phi}{\lambda}(t,\lambda)  \varphi(t,i) e^{-\alpha w e^{r(T-t)}} 
.\end{array}
\]
Let us define a sequence of random times  $\{ \tau_n \}_{n \in \N}$ by setting \begin{equation}
\tau_{n}:= \inf \{s \in [t,T] \ | \ W_s^\Pi < -n, \lambda_{s}>n, \phi(s,\lambda_s)>n, \pd{\phi}{\lambda}(s,\lambda_s)>n \}, \quad n \in \N.
\end{equation}
Throughout the proof, we denote by $C_n$ any constant depending on $n \in \N$.
Consequently,  we get 
\begin{equation*}
\begin{split}
& \mathbb{E} \bigg[ \int_0^{T\wedge \tau_n} \Big( \sigma(v,X_{v})\Pi_{v} \pd{V}{w}(v,W_{v}^\Pi,\lambda_{v},X_{v}) \Big)^{2} \ud v \bigg] \\ 
& \qquad \quad  = \mathbb{E} \bigg[ \int_0^{T\wedge \tau_n}  \sigma^{2}(v,X_{v})\Pi_{v}^{2} \Big(\alpha \phi (v,\lambda_{v}) \varphi(v,X_{v}) e^{r(T-v)-\alpha W^{\Pi}_{v} e^{r(T-v)}}  \Big)^{2} dv \bigg] \\ 
& \qquad \quad \leq C_{n} \mathbb{E} \bigg[ \int_0^{T} \sigma^2(v,X_v)\Pi_{v}  ^{2} dv \bigg] < \infty \ \ \ \forall n\in \N,
\end{split}
\end{equation*} 
since 
$\Pi$ is admissible. Further, 
by \eqref{eq:exp_coeff} we have that 
\begin{equation*}
\begin{split}
& \mathbb{E} \bigg[ \int_0^{T\wedge \tau_n} \Big( c(v,\lambda_{v}) \lambda_v \pd{V}{\lambda}(v,W_{v}^\Pi,\lambda_{v},X_{v}) \Big)^{2} \ud v \bigg]\\ 
& \qquad \quad  = \mathbb{E} \bigg[ \int_0^{T\wedge \tau_n}\bigg( c(v,\lambda_{v}) \lambda_v \pd{\phi}{\lambda}(v,\lambda_{v})  \varphi(v,X_{v}) e^{-\alpha W^{\Pi}_{v} e^{r(X_{v})(T-t)}} \bigg)^{2} \ud v \bigg] \\ 
& \qquad \quad \leq C_{n} \mathbb{E} \bigg[ \int_0^{T}  c(v,\lambda_{v})^{2} \lambda_v^2 \ud v \bigg]< \infty \ \ \ \forall n \in \N.
\end{split}
\end{equation*} 
Furthermore, due to the boundedness of function $V$ until time $\tau_n$, we have that
the stopped process \begin{equation}\left\{\int_{t}^{s\wedge\tau_n}\!\!\!\!\int_{\R} \!\!\! \Big{\{} V\big(v,W_{v}^\Pi,\lambda_{v},X_{v-}+h(X_{v-},z)\big) - V\big(v,W_{v}^\Pi,\lambda_{v},X_{v-}\big) \Big{\}}\mathcal{\widehat{P}}(\ud v,\ud z),\ s \!\in \![t,T]\!\right\} \end{equation} is a $(\mathbb{G},\P)$-martingale (see e.g. \citet[Theorem $26.12(2)$]{MMO}), for every $n \in \N$.
Finally, the stopped processes 
{\small
\begin{equation}
    \Big\{\!\int_{t}^{s\wedge\tau_n}\!\Big ( V\big(v,W_{v-}^\Pi + \Pi_v K_1(v,X_{v-}),\lambda_{v},X_{v-}\big) - V\big(v,W_{v-}^\Pi,\lambda_{v},X_{v}\big)\Big)( \ud N^1_v - \Theta_1(v) \ud v ) ,\ s \in [t,T]\Big\} 
\end{equation}}
and 
{\small
\begin{equation}\Big\{\!\int_{t}^{s\wedge\tau_n}\!\Big( V\big(v,W_{v-}^\Pi - \Pi_v K_2(v,X_{v-}),\lambda_{v},X_{v-}\big) - V\big(v,W_{v-}^\Pi,\lambda_{v},X_{v}\big) \Big)(\ud N^2_v - \Theta_2(v) \ud v) ,\ s \in [t,T]\Big\} \end{equation}}
\normalsize
are also $(\mathbb{G},\P)$-martingales, (see e.g. \citet[Lemma L$3$, Ch.II]{BREMAUD}). 
Thus, the process $\{ M_s, \ s \in [t,T] \}$ turns out to be a $(\mathbb{G},\P)$-local martingale and $\{ \tau_n \}_{n\in \N}$ is a localizing sequence for $\{ M_s, \ s \in [t,T] \}$.
Therefore, taking the conditional expectation of both sides of \eqref{bec} with respect to $W_{t}=w$, $\lambda_{t}=\lambda$ and $X_{t}=i$ with $T$ replaced by $T \wedge \tau_n$, by \eqref{eq:vsol} we obtain that
\begin{equation}
\begin{split}
&\mathbb{E}_{t,w,\lambda,i} \big{[} 	V(T\wedge \tau_n,W_{T\wedge \tau_n}^\Pi(t,w),\lambda_{T\wedge \tau_n},X_{T\wedge \tau_n}(t,i)) \big{]} 
\leq V(t,w,\lambda,i),	
\end{split}
\end{equation}
for every $\Pi \in \mathcal{A}_t$, $t \in \llbracket 0,T\wedge \tau_n\rrbracket$, $n\in \N$. Now, we note that 
\begin{equation}
\begin{split}
&\mathbb{E} \left[ \left(V(T\wedge \tau_n,W_{T\wedge \tau_n}^\Pi(t,w),\lambda_{T\wedge \tau_n}(t,\lambda),X_{T\wedge \tau_n}(t,i)) \right) ^2 \right] \\ &\quad = \mathbb{E} \left[ e^{-2\alpha W_{T\wedge \tau_n}^\Pi e^{r(T\wedge \tau_n -t)}} \varphi(T\wedge \tau_n,X_{T\wedge \tau_n})^2 \phi(T\wedge \tau_n,\lambda_{T\wedge \tau_n})^{2}\right]  \leq \widetilde{K},	
\end{split}
\end{equation} 
for a positive constant $\widetilde{K}$. This means that $ \{V(T\wedge \tau_n,W_{T\wedge \tau_n}^\Pi(t,w),\lambda_{T\wedge \tau_n},X_{T\wedge \tau_n}(t,i)) \}_{n \in \N}$ is a family of uniformly integrable random variables. Hence, it converges almost surely. Since $\{ \tau_n \}_{n \in \N}$ is a bounded and non-decreasing sequence of random times and the process $\{W_s^\Pi(t,w),\ s \in [t,T]\}$ is continuous in $T$, see \eqref{def:solwealth},
in view of \eqref{cond:sup}, we can apply the dominated convergence theorem and, taking the limit for $n \longrightarrow + \infty$, we get
\begin{equation} 
\begin{split}
&\mathbb{E}_{t,w,\lambda,i} \big{[} V(T,W_{T}^\Pi(t,w),\lambda_T(t,\lambda),X_{T}(t,i)) \big{]}\\	& \quad \quad = \lim_{n \longrightarrow +\infty} \mathbb{E}_{t,w,\lambda,i} \big{[} V(T\wedge \tau_n,W_{T\wedge \tau_n}^\Pi(t,w),\lambda_{T\wedge \tau_n}(t,\lambda),X_{T\wedge \tau_n}(t,i)) \big{]} \leq V(t,w,\lambda,i),	
\end{split}
\end{equation}
for every $\Pi \in \mathcal{A}_t$, $t \in [0,T]$.
By the final condition in \eqref{VHJB} and the previous inequality, we get \begin{equation}
\mathbb{E}_{t,w,\lambda,i} \big{[} -e^{-\alpha (W_T^\Pi(t,w)-K)} \big{]} \leq V(t,w,\lambda,i),
\end{equation} 
for every $\Pi \in \mathcal{A}_t$, $t \in [0,T]$. 
Finally, since the insurance payment does not depend on the risky asset price, 
we have that $\Pi^{*}(t,w,\lambda,i)=\Pi^{*}(t,i)$ given in Proposition \ref{opwi} yields that
${\mathcal L}_i^{\Pi^*} V(t,w,\lambda,i) + \lambda\big( \bar{V}(t,w,i)- V(t,w,\lambda,i) \big) = 0$; then, if we apply the above arguments to $\Pi^*$
and replacing ${\mathcal L}_i^\Pi$ with ${\mathcal L}_i^{\Pi^*}$, 
we find the equality
\begin{equation}		
\sup_{\Pi \in \mathcal{A}_t} \mathbb{E}_{t,w,\lambda,i} \big{[} -e^{-\alpha (W_T^\Pi(t,w)-K)} \big{]} = V(t,w,\lambda,i),		\end{equation} 
which implies that the process $\Pi^*(t,X_t)$ is an optimal Markovian control.
\end{proof}

\section{Derivation of the HJB equation}\label{app:HJB}

For the sake of clarity, we show how to obtain a formal derivation of the HJB equation \eqref{VHJB} associated to the problem with the insurance derivative.
To this aim, we apply the Bellman's dynamic programming principle that, in this context, it is formulated as follows.
\begin{proposition}[Bellman optimality principle]
Let $(t,w,\lambda,i) \in [0,T] \times \R \times \Rp \times \mathcal{X}$. Then, for $t \leq t+h \leq T$ and $\Pi \in {\mathcal{A}_t}$, we have
\begin{equation}\label{PPDV}
V(t,w,\lambda,i) \geq \mathbb{E}_{t,w,\lambda,i} \left[ V(t+h,W_{t+h}^\Pi,\lambda_{t+h},X_{t+h}) \right], 
\end{equation}
where $V$ is the value function introduced in \eqref{Vuesp}.
Moreover, equality holds in \eqref{PPDV} if, and only if, the arbitrary control $\Pi$ on the interval $[t,t+h]$ is optimal.
\end{proposition}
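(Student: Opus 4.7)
The plan is to establish the Bellman inequality and its equality case via the standard dynamic programming scheme, exploiting the Markov property of the state triple $(W^\Pi, \Lambda, X)$ proved in Lemma \ref{wysx} together with the tower property of conditional expectations.

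For the inequality in \eqref{PPDV}, I would fix an arbitrary $\Pi \in \mathcal{A}_t$ and write, using the tower property with $\mathcal{G}_t \subset \mathcal{G}_{t+h}$,
\begin{equation}
\mathbb{E}_{t,w,\lambda,i}\!\left[-e^{-\alpha(W_T^\Pi - G_T)}\right] = \mathbb{E}_{t,w,\lambda,i}\!\left[\, \mathbb{E}\!\left[-e^{-\alpha(W_T^\Pi - G_T)}\,\Big|\,\mathcal{G}_{t+h}\right]\right].
\end{equation}
The restriction $\Pi\big|_{[t+h,T]}$ is itself $\mathbb{G}$-predictable and inherits admissibility, i.e.\ it lies in $\mathcal{A}_{t+h}$. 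By the Markov property of $(W^\Pi,\Lambda,X)$ and the independence of the insurance filtration $\mathbb{F}^I$ from $\mathbb{F}$, the inner conditional expectation (on the event $\{\tau>t+h\}$ implicit in the formulation through $G_{t+h}=K$) depends on $\mathcal{G}_{t+h}$ only through the random vector $(W_{t+h}^\Pi, \lambda_{t+h}, X_{t+h})$ and on the tail strategy. By definition of $V$ as a supremum over $\mathcal{A}_{t+h}$, this conditional expectation is bounded above by $V(t+h, W_{t+h}^\Pi, \lambda_{t+h}, X_{t+h})$. Taking the outer expectation and then the supremum over $\Pi \in \mathcal{A}_t$ yields \eqref{PPDV}.

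For the equality case, I would proceed in two directions. If $\Pi$ realizes the supremum in \eqref{Vuesp} on $[t,T]$, then its restriction to $[t+h,T]$ must also be optimal for the subproblem starting from $(W_{t+h}^\Pi, \lambda_{t+h}, X_{t+h})$, so the above chain of inequalities collapses to an equality. Conversely, if $\Pi$ is optimal only on $[t,t+h]$ in the sense of being part of an optimal concatenation, then for any $\varepsilon>0$ one selects $\Pi^\varepsilon \in \mathcal{A}_{t+h}$ (depending $\mathcal{G}_{t+h}$-measurably on the state at $t+h$, via a measurable selection argument) such that
\begin{equation}
\mathbb{E}\!\left[-e^{-\alpha(W_T^{\Pi^\varepsilon} - G_T)}\,\Big|\,\mathcal{G}_{t+h}\right] \geq V(t+h, W_{t+h}^\Pi, \lambda_{t+h}, X_{t+h}) - \varepsilon.
\end{equation}
Gluing $\Pi$ on $[t,t+h]$ with $\Pi^\varepsilon$ on $[t+h,T]$ produces an admissible element of $\mathcal{A}_t$ whose expected utility is bounded below by $\mathbb{E}_{t,w,\lambda,i}[V(t+h,W_{t+h}^\Pi,\lambda_{t+h},X_{t+h})] - \varepsilon$. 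Letting $\varepsilon \downarrow 0$ and combining with the first direction yields the asserted equivalence.

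The main obstacle is the measurable selection step in the converse direction: one needs to ensure that the $\varepsilon$-optimal tail control $\Pi^\varepsilon$ can be chosen in a $\mathcal{G}_{t+h}$-measurable way as a function of $(W_{t+h}^\Pi,\lambda_{t+h},X_{t+h})$ so that the concatenated process is genuinely $\mathbb{G}$-predictable and admissible. This is typically handled via classical measurable selection theorems (e.g. Wagner's theorem) applied to the set-valued map $(w,\lambda,i) \mapsto \{\Pi \in \mathcal{A}_{t+h} : \text{expected utility} \geq V(t+h,w,\lambda,i)-\varepsilon\}$; verifying its Borel measurability requires some care due to the finite-state chain $X$ and the jump terms in the wealth dynamics \eqref{wealth}, but no new probabilistic input beyond what has already been established.
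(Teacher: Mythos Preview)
The paper does not actually prove this proposition. It is stated in Appendix~\ref{app:HJB} and immediately followed by a single explanatory sentence (``The idea is that if the insurer follows the optimal strategy on $[t,T]$, her/his expected utility is at least as great as if she/he invests arbitrarily on $[t,t+h[$ and then optimally on $[t+h,T]$\ldots''), after which the authors proceed directly to the formal derivation of the HJB equation. In other words, the Bellman principle is invoked as a standard result rather than proved.

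Your proposal is therefore strictly more detailed than anything the paper offers. The argument you sketch---tower property plus the Markov structure from Lemma~\ref{wysx} for the inequality, and an $\varepsilon$-optimal measurable selection for the converse---is the classical route and is correct in outline. Your honest identification of the measurable selection step as the main technical obstacle is appropriate; that step is indeed where real work lies, and it is precisely what the paper (like most applied control papers) sidesteps by treating the DPP as given. One small remark: the conditioning on the event $\{\tau > t\}$ (through the convention $G_t = K$) deserves a bit more care in your write-up, since the value function $V$ is defined only on that event and the role of the hazard rate $\lambda$ enters through the survival probability rather than through a state variable of the wealth dynamics; but this does not affect the validity of your approach.
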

\noindent The idea is that if the insurer follows the optimal strategy on $[t,T]$, her/his expected utility is at least as great as if she/he invests arbitrarily on $[t,t+h)$ and then optimally on $[t+h,T]$, for $h$ sufficiently small such that $t+h<T$. 
In the application of the dynamic programming principle, we must consider whether the policyholder survives from time $t$ until time $t+h$, as in \citet{YOUNGZAR}, \citet{YOUNGMOORE}, \citet{LudkowskiYoung2008} and \citet{Young}. 
Consider an individual aged $l$, who is seeking to buy a pure endowment policy. For the rest of this section, we write $(l)$ to refer to this individual. 
For each $h$ such that $t+h<T$, if the individual $(l+t)$ survives for another $h$ years until time $t+h$, which happens with probability $_{h}p_{l+t}$, the insurer still faces the endowment risk on the time interval $[t+h,T]$. In this case, by  \eqref{Vuesp}, the maximum expected utility derived by investing optimally on $[t+h,T]$ is $V(t+h,W_{t+h}^\Pi,\lambda_{t+h},X_{t+h})$. 
However, if the individual $(l+t)$ dies in $[t,t+h]$, an event that happens with probability $_{h}q_{l+t}$, then the insurer 
is not longer at risk for the endowment payout. Hence, by \eqref{Vbaruesp}, the maximum expected utility derived by investing optimally on $[t+h,T]$ is $\bar{V}(t+h,W_{t+h}^\Pi,X_{t+h})$.
From \eqref{PPDV}, we have 
{\small \begin{equation}
\begin{split}
V(t,w,\lambda,i)\geq \ _{h}p_{l+t} \mathbb{E}_{t,w,\lambda,i} \Big{[} V(t+h,W_{t+h}^\Pi,\lambda_{t+h},X_{t+h}) \Big{]}  + \ _{h}q_{l+t} \mathbb{E}_{t,w,i} \Big{[} \bar{V}(t+h,W_{t+h}^\Pi,X_{t+h}) \Big{]}. 
\end{split}
\end{equation}}
If we assume enough regularity conditions and appropriate integrability on the
value functions and their derivatives,
by applying It\^{o}'s formula and conditioning on $W_{t}^\Pi=w$, $\lambda_{t}=\lambda$ and $X_{t}=i$, we get
\begin{equation*}
\begin{aligned}
& V(t,w,\lambda,i) \geq  \ _{h}p_{l+t} V(t,w,\lambda,i) + _{h}q_{l+t}\bar{V}(t,w,i) \\ & \ + _{h}p_{l+t} \mathbb{E}_{t,w,\lambda,i} \bigg{[} \int_{t}^{t+h} \bigg{\{} \pd{V}{t} + \Big[ rW_{v}^{\Pi} + \big(\mu(v,X_{v})- r\big) \Pi_{v} \Big] \pd{V}{w} \ud v \bigg{\}} \bigg]  \\ & \ + _{h}p_{l+t} \mathbb{E}_{t,w,\lambda,i} \bigg[ \int_{t}^{t+h} \bigg{\{} b(v,\lambda_{v}) \lambda_v \pd{V}{\lambda} +  \frac{1}{2} \sigma^{2}(v,X_{v}) \Pi_{v}^{2} \pds{V}{w} +\frac{1}{2}c^{2}(v,\lambda_{v}) \lambda_v^2 \pds{V}{\lambda}\bigg{\}}  \ud v \bigg{]} \\ & \ + _{h}p_{l+t} \mathbb{E}_{t,w,\lambda,i} \bigg{[} \int_{t}^{t+h} \bigg{\{}\sum_{j \in \mathcal{X}} V(v,W_{v}^{\Pi},\lambda_{v},j)a_{v,j} \bigg{\}}  \ud v \bigg{]} \\ & \ + _{h}p_{l+t} \mathbb{E}_{t,w,\lambda,i} \bigg{[} \int_{t}^{t+h} \Theta_1(v) \big{\{} V(v,W_v^{\Pi}+\Pi_v K_1(v,i),\lambda_v,X_v) -V(v,W^{\Pi}_v,\lambda_v,X_v) \big{\}} \ud v \bigg{]} \\ & \ + _{h}p_{l+t} \mathbb{E}_{t,w,\lambda,i} \bigg{[} \int_{t}^{t+h} \Theta_2(v) \big{\{} V(v,W_v^{\Pi}-\Pi_v K_2(v,i),\lambda_v,i) -V(v,W_v^{\Pi},\lambda_v,X_v) \big{\}}  \ud v \bigg{]}
\\ & \ + _{h}q_{l+t} \mathbb{E}_{t,w,i} \bigg{[} \int_{t}^{t+h} \bigg{\{} \pd{\bar{V}}{t} + \Big[ rW_{v}^{\Pi} + \big(\mu(v,X_{v})- r \big)\Pi_{v} \big] \pd{\bar{V}}{w} \bigg{\}}  \ud v \bigg{]} \\ &  \ + _{h}q_{l+t} \mathbb{E}_{t,w,i} \bigg{[} \int_{t}^{t+h} \bigg{\{} \frac{1}{2} \sigma(v,X_{v})^{2} \Pi_{v}^{2} \pds{V}{w} + \sum_{j \in \mathcal{X}} \bar{V}(v,W_{v},j)a_{v,j} \bigg{\}} \ud v \bigg{]} \\ & \ + _{h}q_{l+t} \mathbb{E}_{t,w,i} \bigg{[} \int_{t}^{t+h} \Theta_1(v) \big{\{} \bar{V}(v,W_v^{\Pi}+\Pi_v K_1(v,i),X_v) -\bar{V}(v,W^{\Pi}_v,X_v) \big{\}} \ud v \bigg{]} \\ & \ + _{h}q_{l+t} \mathbb{E}_{t,w,i} \bigg{[} \int_{t}^{t+h} \Theta_2(v) \big{\{} \bar{V}(v,W_v^{\Pi}-\Pi_v K_2(v,i),i) -\bar{V}(v,W_v^{\Pi},X_v) \big{\}}  \ud v \bigg{]}.
\end{aligned}
\end{equation*}
To keep the formulas readable, in the integrals above we have suppressed the independent variables $(v,W_{v},\lambda_{v},X_v)$ and $(v,W_{v},X_v)$ of the partial derivatives of $V$ and $\bar V$, respectively.
By subtracting $_{h}p_{l+t}V(t,w,\lambda,i)$ from both sides of inequality and dividing both sides by $h$, we obtain 
{\small \begin{equation*}
\begin{aligned}
& \frac{_{h}q_{l+t}}{h}V(t,w,\lambda,i) \geq  \ \frac{_{h}q_{l+t}}{h} \bar{V}(t,w,i) \\ & \ + _{h}p_{l+t} \mathbb{E}_{t,w,\lambda,i} \bigg{[} \int_{t}^{t+h} \frac{1}{h}\bigg{\{} \pd{V}{t} + \Big[ rW_{v}^{\Pi} + \big(\mu(v,X_{v})- r\big)\Pi_{v} \Big] \pd{V}{w} \ud v \bigg{\}} \bigg]  \\ & \ + _{h}p_{l+t} \mathbb{E}_{t,w,\lambda,i} \bigg[ \int_{t}^{t+h}\frac{1}{h} \bigg{\{} b(v,\lambda_{v}) \lambda_v \pd{V}{\lambda} +   \frac{1}{2} \sigma^{2}(v,X_{v}) \Pi_{v}^{2} \pds{V}{w} +\frac{1}{2}c^{2}(v,\lambda_{v\!})\lambda_v^2 \pds{V}{\lambda} \bigg{\}}  \ud v \bigg{]} \\ & \ + _{h}p_{l+t} \mathbb{E}_{t,w,\lambda,i} \bigg{[} \int_{t}^{t+h}\frac{1}{h} \bigg{\{}\sum_{j \in \mathcal{X}} V(v,W_{v},\lambda_{v},j)a_{v,j} \bigg{\}}  \ud v \bigg{]} \\ & \ + _{h}p_{l+t} \mathbb{E}_{t,w,\lambda,i} \bigg{[} \int_{t}^{t+h} \frac{1}{h} \bigg{\{} \Theta_1(v) \big{\{} V(v,W_v^{\Pi}+\Pi_v K_1(v,i),\lambda_v,X_v) -V(v,W^{\Pi}_v,\lambda_v,X_v) \big{\}} \bigg{\}} \ud v \bigg{]} \\ & \ + _{h}p_{l+t} \mathbb{E}_{t,w,\lambda,i} \bigg{[} \int_{t}^{t+h} \frac{1}{h} \bigg{\{} \Theta_2(v) \big{\{} V(v,W_v^{\Pi}-\Pi_v K_2(v,i),\lambda_v,i) -V(v,W_v^{\Pi},\lambda_v,X_v) \big{\}} \bigg{\}} \ud v \bigg{]}
\\ & \ + _{h}q_{l+t} \mathbb{E}_{t,w,i} \bigg{[} \int_{t}^{t+h}\frac{1}{h} \bigg{\{} \pd{\bar{V}}{t} + \Big[ rW_{v} + \big(\mu(v,X_{v})- r \big)\Pi_{v} \big] \pd{\bar{V}}{w} \bigg{\}}  \ud v \bigg{]} \\ & \ + _{h}q_{l+t} \mathbb{E}_{t,w,i} \bigg{[} \int_{t}^{t+h} \frac{1}{h}\bigg{\{} \frac{1}{2} \sigma(v,X_{v})^{2} \Pi_{v}^{2} \pds{V}{w} + \sum_{j \in \mathcal{X}} \bar{V}(v,W_{v}^{\Pi},j)a_{v,j} \bigg{\}} \ud v \bigg{]} \\ & \ + _{h}q_{l+t} \mathbb{E}_{t,w,i} \bigg{[} \int_{t}^{t+h} \frac{1}{h} \bigg{\{}\Theta_1(v) \big{\{} \bar{V}(v,W_v^{\Pi}+\Pi_v K_1(v,i),X_v) -\bar{V}(v,W^{\Pi}_v,X_v) \big{\}} \bigg{\}} \ud v \bigg{]} \\ & \ + _{h}q_{l+t} \mathbb{E}_{t,w,i} \bigg{[} \int_{t}^{t+h} \frac{1}{h} \bigg{\{} \Theta_2(v) \big{\{} \bar{V}(v,W_v^{\Pi}-\Pi_v K_2(v,i),i) -\bar{V}(v,W_v^{\Pi},X_v) \big{\}} \bigg{\}} \ud v \bigg{]}. \end{aligned}
\end{equation*}}
We observe that as $h\longrightarrow 0^{+}$, we have 
{\small \begin{equation}
_{h}p_{l+t} \longrightarrow 1 , \ \ \ _{h}q_{l+t} \longrightarrow 0  \  \ \ \mbox{and} \ \ \ \frac{_{h}q_{l+t}}{h} \longrightarrow \lambda_{t},
\end{equation}} for each $t \in [0,T]$. Consequently, taking the limit as $h\longrightarrow 0^{+}$ yields 
{\small \begin{equation*}
\begin{split}
0 & \geq  \lambda \big( \bar{V}(t,w,i)-V(t,w,\lambda,i) \big) + \pd{V}{t} + \big[ rw + \big(\mu(t,i) - r\big)\Pi \big] \pd{V}{w} + b(t,\lambda)\lambda \pd{V}{\lambda} \\ &  + \frac{1}{2}\Pi^{2}\sigma^{2}(t,i) \pds{V}{w}+ \frac{1}{2}c^{2}(t,\lambda)\lambda^2\pds{V}{\lambda} +
\sum_{j \in \mathcal{X}} V(t,w,\lambda,j)a_{ij}  \\& + \Theta_1(t) \big{\{} V(t,w+\Pi K_1(t,i),\lambda,i) -V(t,w,\lambda,i) \big{\}} \\
& + \Theta_2(t) \big{\{} V(t,w-\Pi K_2(t,i),\lambda,i) -V(t,w,\lambda,i) \big{\}}.
\end{split}
\end{equation*} }
Finally, we note that along the optimum, we have
{\small \begin{equation*}
\begin{split}
& 0 =   \lambda \left( \bar{V}(t,w,i)- V(t,w,\lambda,i) \right) + \pd{V}{t} + rw\pd{V}{w} +  b(t,\lambda)\lambda \pd{V}{\lambda} + \frac{1}{2} c^{2}(t,\lambda)\lambda^2\pds{V}{\lambda} + \sum_{j \in \mathcal{X}} V(t,w,\lambda,j)a_{ij} \\ &  + \sup_{\Pi \in \R} \Bigg{[} \big(\mu(t,i)- r\big)\Pi \pd{V}{w}  + \frac{1}{2} \sigma^{2}(t,i) \Pi^{2} \pds{V}{w}  + \Theta_1(t) \big{\{} V(t,w+\Pi K_1(t,i),\lambda,i) -V(t,w,\lambda,i) \big{\}} \\ & + \Theta_2(t) \big{\{} V(t,w-\Pi K_2(t,i),\lambda,i) -V(t,w,\lambda,i) \big{\}} \Bigg{]}, \quad \forall (t,w,\lambda,i) \in [0,T) \times \R \times \Rp \times \mathcal X,
\end{split}
\end{equation*}}
with $V(T,w,\lambda,i)=-e^{-\alpha(w-K)}$, for each $(w,\lambda,i) \in \R \times \Rp \times \mathcal X$,
which coincides with \eqref{VHJB}.

\end{document}